\def\Blackboardfont{\mathbb}
\newtheorem{theorem}{Theorem}[section]
\newtheorem{proposition}[theorem]{Proposition}
\newtheorem{pr-def}[theorem]{Proposition-Definition}
\newtheorem{corollary}[theorem]{Corollary}
\newtheorem{lemma}[theorem]{Lemma}
\newtheorem{remark}[theorem]{Remark}
\newcommand{\1}{\mathbbm{1}}
\newcommand{\mrm}[1]{\text{\rm #1}}
\def\eref#1{(\ref{#1})}
\def\Blackboardfont{\mathbb}
\newcommand{\cA}{{\cal A}}
\newcommand{\cD}{{\cal D}}
\newcommand{\cF}{{\cal F}}
\newcommand{\cT}{{\cal T}}
\newcommand{\cR}{{\cal R}}
\newcommand{\cN}{{\cal N}}
\newcommand{\cS}{{\cal S}}
\def\Z{{\Blackboardfont Z}}
\def\N{{\Blackboardfont N}}
\def\R{{\Blackboardfont R}}
\def\triangleup{\vartriangle}
\title{Queues, Stores, and Tableaux}
\author{Moez Draief
 \thanks{LIAFA, Universit\'e Paris 7, case 7014,
2 place Jussieu, 75251 Paris Cedex 05, France.
{\tt draief@liafa.jussieu.fr}.}
\and Jean Mairesse
\thanks{LIAFA, Universit\'e Paris 7, case 7014,
2 place Jussieu, 75251 Paris Cedex 05, France.
{\tt mairesse@liafa.jussieu.fr}.}
\and
Neil O'Connell
\thanks{Mathematics Institute, Univ. Warwick, Coventry CV4 7AL, United
 Kingdom. {\tt noc@maths.warwick.ac.uk}.}
}
\begin{document}

\maketitle

\noindent

\begin{abstract}
Consider the single server queue with an infinite buffer and
a FIFO discipline, either of type $M/M/1$ or Geom/Geom/1.
Denote by  $\cA$ the arrival process and by $s$ the services.
Assume the stability condition to be
satisfied. Denote by $\cD$ the departure process in equilibrium and by
$r$ the time spent by the customers at the very back
of the queue. We prove that $(\cD,r)$ has the same law as $(\cA,s)$
which is an extension of the classical Burke Theorem.
In fact, $r$ can be viewed as the departures from a dual {\em
  storage} model. This duality between the two models also appears
when studying the transient behavior of a tandem by means of the
RSK algorithm: the first and last row of the resulting
semi-standard Young tableau are respectively the last instant of
departure in the queue and the total number of departures in the
store.

\medskip

{\bf Keywords:} {Single server queue, storage model, Burke's
theorem, non-colliding random walks, tandem of queues,
Robinson-Schensted-Knuth algorithm}
\end{abstract}

\section{Introduction}

The main purpose of this paper is to clarify the interplay between two
models of queueing theory. The first model is the very classical single
server queue with an infinite buffer and a FIFO discipline. The
second, less common but very natural, model can be described as a queue operating in slotted time
with batch arrivals and services. It was studied for instance in
\cite{BeAz}. 
The literature on discrete-time queues is relatively less furnished
than its continuous-time counterpart. The interest for discrete-time
queues is however gaining ground due to their relevance in ATM-like
communication systems~\cite{BrKi,taka93}. 

In this paper, to clearly distinguish between the two models, we choose
to describe the second one with a different terminology and as a
storage model.

We prove first that the two models are linked in a very strong
way. We set up an abstract model with an ordered pair of input
variables $(\cA,s)$ and an ordered pair of output variables
$(\cD,r)=\Phi(\cA,s)=(\Phi_1(\cA,s),\Phi_2(\cA,s))$. On the one
hand, the queueing model corresponds to $\cD=\Phi_1(\cA,s)$ with
$\cA$ and $s$ being respectively the arrivals and services and
$\cD$ the departures. On the other hand, the storage model
corresponds to $r=\Phi_2(\cA,s)$, with $s$ and $\cA$ being
respectively the supplies (arrivals) and requests (services) and
$r$ the departures. The interpretation of either $r$ for the queue
or $\cD$ for the store is much less natural.

Then we assume that the random variables driving the dynamic are
either exponentially or geometrically distributed, and we consider
the models in equilibrium (under the stability condition). In this
situation, it is well known that a Burke's type Theorem holds: the
departures and the arrivals have the same law
\cite{burk,reic,BeAz}. This can be considered as one of the
cornerstones of queueing theory, see for instance the books
\cite{brem,kell79,robe03} for discussions and related materials.
Here we prove a `joint' version of the result: $(\cD,r)$ has the
same law as $(\cA,s)$. This joint Burke Theorem is new, although
it is similar in spirit to the results proved in \cite{OY02,KOR}
for a variant: queues with {\em unused services}. Also, in the
geometric case, we use an original method of proof based on the
reversibility of a symmetrized version of the workload process
(instead of the queue length process). As in \cite{OY02,KOR}, the
joint Burke's Theorem can be used to obtain a representation
theorem for the joint law of two independent random walks with
exponential or geometric jumps conditioned on never colliding.

A second facet of the duality between queues and stores appears
when considering the Robinson-Schensted-Knuth (RSK) algorithm. 
The RSK algorithm
is an important object in representation theory and symmetric functions
theory and it has been known for some time now that is has connections
with queueing theory, although these connections have remained somewhat 
mysterious until now.  In this paper we elucidate these connections
completely via the above duality.  The results we obtain, apart from
being interesting in their own right, can be used as a starting point
for the application of symmetric functions theory (and the associated
analysis) to obtain interchangeability as well as asymptotic results 
for models which are constructed by putting the above queues and stores 
in series.  These results are also applicable in the context of the
corresponding interacting particle systems, where there is much interest
in obtaining exactly solvable microscopic models for Burger-type
equations. 

Let us now detail the results. Consider $K$
queues (stores) in tandem: the departures from a queue (store) are
the arrivals (supplies) at the next one. Initially,
the network is empty except for an infinite number
of customers (infinite supply) in the first queue (store).

The particle systems associated with the above tandems 
are described in Section \ref{sse-part}. One model is original and
worth putting forward: the {\em bus-stop model} which is the
zero-range model associated with stores in tandem. 

Here we assume that the variables driving the dynamic are
$\N$-valued without any further assumption.
Building on ideas developed in \cite{bary01,ocon03},
we can study the transient evolution as follows. Consider the
family of r.v.'s $(u(i,j))_{1\leq i \leq N, 1 \leq j \leq K}$
where $u(i,j)$ is the service of the $i$-th customer at queue $j$,
resp. the request at time slot $i$ in store $K+1-j$. Apply the RSK
algorithm, see \cite{knut70,saga,stan2}, to this family and let
$P$ be the resulting semi-standard Young tableau (here we do not
consider the recording tableau $Q$). Let $\lambda_1 \geq \cdots
\geq \lambda_K \geq 0$ be the lengths of the successive rows of
$P$. Classically \cite{knut70}, we have $\lambda_1=\max_{\pi \in
\Pi} \sum_{(i,j)\in \pi} u(i,j)$, where $\Pi$ is the set of paths
in $\N^2$ going from $(1,1)$ to $(N,K)$ and which are increasing
and consist of adjacent points. Moreover, it is well known in
queueing theory \cite{GlWh,muth,TeWo,SzKe} that $\max_{\pi \in \Pi}
\sum_{(i,j)\in \pi} u(i,j)$ is $D$: the instant of departure of
customer $N$ from queue $K$. Pasting the two results together
gives the folklore observation that $\lambda_1=D$. Here we
complete the picture by proving that $\lambda_K=R$, where $R$ is
the total number of departures from the last store in the tandem
up to time slot $N$. Again, this identity is proved in two steps
by showing that $\lambda_K=\min_{\pi \in \widetilde{\Pi}}
\sum_{(i,j)\in \pi} u(i,j)=R$, where $\widetilde{\Pi}$ is a
different set of paths in the lattice. To summarize, we obtain on
the same Young tableau the total departures for the two tandem
models.

A short version without proofs of the paper appears in the proceedings
of the conference {\em Discrete Random Walks 03} \cite{DMOc}.

\section{Notations}

We work on a probability
space $(\Omega, \cF,P)$. The indicator function of an event $E\in
\cF$ is denoted by $\1_{E}$. We use the symbol $\sim$ to denote the
equality in distribution of random variables.
Depending on the context, $|A|$ is the cardinal of the set $A$ or the
length of the word $A$. We set $\N^*=\N\setminus \{0\}$,
$\R^*_+=\R_+\setminus \{0\}$, and $x^+=x\vee 0
=\max(x,0)$. We use the convention that $\sum_{i=j}^k u_i =0$ when
$j>k$.

Below, a {\em  point process} is a stochastic simple point process on
$\R$ with an infinite
number of positive and negative points. We identify a point process $\cA$
with the random ordered sequence of its points: $\cA = (A_n)_{n\in
  \Z}$ with $A_n<A_{n+1}$ for all $n$. Observe that the numbering of
the points is defined up to a translation in the indices.
For any interval $I$, we define the {\em counting} random variable:
$\cA(I)=\sum_{n\in \Z} \1_{\{A_n \in I\}}$.

A {\em marked point process} is a couple $(\cA,c)=(A_n,c_n)_{n\in \Z}$
where $\cA=(A_n)_{n\in \Z}$ is a  point process and $c=(c_n)_{n\in \Z}$
is a sequence of r.v.'s valued in some state space. The {\em mark}
$c_n$ is {\em associated} to the point $A_n$ of the point process. For
precisions concerning point processes see \cite{DaVe}.

\medskip

Given a point process $\cA=(A_n)_{n\in\Z}$, the {\em reversed}
point process $\cR(\cA)$ is the point process obtained by
reversing the direction of time; i.e. $\cR(\cA)=(-A_{-n})_{n\in
\Z}$. Given a marked point process $(\cA,c)=(A_n,c_n)_{n\in \Z}$,
the {\em reversed} marked point process is $\cR(\cA,c)=
(-A_{-n},c_{-n})_{n\in \Z}$.

Given a c\`adl\`ag, i.e. right-continuous and left-limited, random
process $Y=(Y(t))_{t\in \R}$ valued in $\R$, we define the {\em
reversed} process $\cR\circ Y=(\cR\circ Y(t))_{t\in \R}$ as the
c\`adl\`ag modification of the process $(Y(-t))_{t\in \R}$. Denote
by $\cN_+(Y)$ and $\cN_-(Y)$ the point processes (with a possibly
finite number of points) corresponding
respectively to the positive and negative jumps of $Y$, that is
for any interval $I$ of $\R$,
\begin{equation}\label{jumps}
\cN_+(Y)(I)= \int_{I} \1_{\{Y(u)>Y(u-)\}} du, \; \cN_-(Y)(I)=
\int_{I} \1_{\{Y(u)<Y(u-)\}} du\:.
\end{equation}

\section{The Model}
\label{se-model}

Let $\cA=(A_n)_{n \in \Z}$ be a point process and assume that  $A_0\leq 0
 <A_1$.
We define
the $\R_+^*$-valued sequence of r.v's $a=(a_n)_{n \in \Z}$ by $a_n =A_{n+1}
-A_{n}$. Let $s=(s_n)_{n \in \Z}$ be another $\R_+^*$-valued
sequence of r.v's. The marked point process $(\cA,s)$ is the {\em input} of
 the
model.

\medskip

Define the sequence of r.v.'s $\cD=(D_n)_{n \in \Z}$ by
\begin{equation}
\label{Depar.times} D_n = \sup_{k \leq n}\: \Bigl[A_k +\sum_{i=k}^n
s_i \Bigr]\:.
\end{equation}

A priori the $D_n$'s are valued in $\R \cup \{+\infty\}$. Assume
from now on that $(\cA,s)$ is such that the $D_n$'s are almost surely
finite. They
satisfy the recursive equations:
\begin{equation}
\label{eq-recd} D_{n+1} = \max (D_n , A_{n+1}) + s_{n+1}\:.
\end{equation}
Set $\cD=(D_n)_{n\in\Z}$ and set $d_n=D_{n+1}-D_n$. Define
an additional sequence of r.v.'s $r=(r_n)_{n \in \Z}$, valued in
$\R_+^*$, by
\begin{eqnarray}\label{reverserv.times}
r_n &=& \min(D_n, A_{n+1})-A_n\:.
\end{eqnarray}
The marked point process $(\cD,r)$ is the {\em output} of the
model. By summing \eref{eq-recd} and \eref{reverserv.times}, we get
the following interesting relation between input and output variables:
\begin{equation}\label{eq-rel0}
r_n+d_n = a_n +s_{n+1}\:.
\end{equation}

In view of the future analysis, it is convenient to define the
following auxiliary variables. Let $w=(w_n)_{n \in \Z}$ be the
sequence of r.v.'s valued in $\R_+$ and defined by
\begin{equation}
\label{Devolindley} w_n = D_n -s_n -A_n = \sup_{k\leq n-1}
\Bigl[\ \sum_{i=k} ^{n-1} (s_i-a_i) \Bigr]^+ \:.
\end{equation}
These r.v.'s satisfy the recursive equations:
\begin{equation}
\label{Reclindley} w_{n+1} = [w_n + s_n -a_n]^+\:.
\end{equation}
Using the variables $w_n$, we can give alternative definitions of $D_n$
and $r_n$:
\begin{equation}
\label{backD} \forall l\leq n, \ D_n= \Bigl[w_l + A_l + \sum_{i=l}^n s_i \Bigr]
\vee \max_{l< k \leq n}\: \Bigl[A_k +\sum_{i=k}^n
s_i \Bigr]\:,
\end{equation}
\begin{equation}
\label{back2} r_n= \min\{w_n + s_n, a_n\}=s_n+w_n-w_{n+1}\:.
\end{equation}
At last, we define the c\`adl\`ag random process $Q=(Q(t))_{t \in \R}$,
valued in $\N$, by
\begin{equation}
\label{Queue-length} Q(t) = \sum_{n \in \Z} \1_{\{A_n \leq t <
D_n\}}\:.
\end{equation}

\begin{lemma}\label{numbering}
We have $\cN_+(Q)=\cA$ and
$\cN_-(Q)=\cD$. Furthermore, $D_{-Q(0)}\leq 0 < D_{-Q(0)+1}$.
\end{lemma}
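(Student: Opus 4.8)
The plan is to deduce all three assertions from a single bookkeeping identity. First I would record the elementary structural facts. From \eref{eq-recd} and $s_{n+1}>0$ we get $D_{n+1}>\max(D_n,A_{n+1})$, so the sequence $(D_n)_{n\in\Z}$ is strictly increasing and $A_n<D_n$ for every $n$; moreover $(A_n)_{n\in\Z}$ is strictly increasing with $A_n\to\pm\infty$ (since $\cA$ is a locally finite point process with infinitely many points on each half-line), whence $D_n\to+\infty$ as $n\to+\infty$. The remaining limit, $D_n\to-\infty$ as $n\to-\infty$, is forced by finiteness of $Q$: if we had $D_n\downarrow c>-\infty$, then for any $t<c$ the infinitely many indices $n$ with $A_n\le t$ would all satisfy $A_n\le t<c<D_n$, making $Q(t)$ infinite.

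Next I would fix $t\in\R$ and observe that, by the strict monotonicity and the limits just established, both $\{n:A_n\le t\}$ and $\{n:D_n\le t\}$ are nonempty, bounded above, and downward closed in $\Z$; write them as $(-\infty,\alpha(t)]\cap\Z$ and $(-\infty,\delta(t)]\cap\Z$. Since $A_n<D_n$ forces $\{n:D_n\le t\}\subseteq\{n:A_n\le t\}$, we have $\delta(t)\le\alpha(t)$, and because $\{n:A_n\le t<D_n\}=\{n:A_n\le t\}\setminus\{n:D_n\le t\}$ this set is exactly the block $\{\delta(t)+1,\dots,\alpha(t)\}$ of consecutive integers (this is just the FIFO property), so \eref{Queue-length} reads
\[
 Q(t)=\alpha(t)-\delta(t).
\]
The last assertion of the lemma is then immediate: the hypothesis $A_0\le 0<A_1$ says precisely $\alpha(0)=0$, hence $\delta(0)=\alpha(0)-Q(0)=-Q(0)$, that is, $D_{-Q(0)}\le 0<D_{-Q(0)+1}$.

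For $\cN_+(Q)=\cA$ and $\cN_-(Q)=\cD$ I would note that $t\mapsto\alpha(t)$ is a right-continuous step function that jumps up by exactly $1$ at each point of $\cA$ and is otherwise constant (using local finiteness and simplicity of $\cA$), and likewise $t\mapsto\delta(t)$ jumps up by $1$ at each point of the locally finite set $\cD=(D_n)_{n\in\Z}$; hence $Q(t)-Q(t-)=\1_{\{t\in\cA\}}-\1_{\{t\in\cD\}}$, so $Q$ has a unit upward jump exactly at the points of $\cA\setminus\cD$ and a unit downward jump exactly at the points of $\cD\setminus\cA$. The only genuinely delicate point — and the step I expect to be the real obstacle — is therefore to rule out common points of $\cA$ and $\cD$: if $D_m=A_n$ for some $m<n$, a simultaneous departure and arrival cancel and $Q$ does not jump there. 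By \eref{Devolindley} this coincidence is the event $\{\,w_m+s_m=\sum_{i=m}^{n-1}a_i\,\}$, which has probability zero for each fixed pair $m<n$ in the $M/M/1$ case (condition on everything but $s_m$, which has a density), so it holds a.s.\ there, and in general I would carry it as the tacit regularity assumption on the input. Granting this, $\cA\setminus\cD=\cA$ and $\cD\setminus\cA=\cD$, which finishes the proof.
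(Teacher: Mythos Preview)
Your argument is correct and follows the same counting idea as the paper: both identify the set $\{n:A_n\le t<D_n\}$ as a block of consecutive integers, so that $Q(t)$ equals the difference of the two running counters $\alpha(t)$ and $\delta(t)$, and then specialise to $t=0$. The paper's proof, however, is essentially a one-line sketch that addresses only the third assertion and leaves $\cN_+(Q)=\cA$ and $\cN_-(Q)=\cD$ as self-evident; you supply the monotonicity, the limits $D_n\to\pm\infty$, and the jump analysis that the paper omits.

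The one point where you go genuinely beyond the paper is your treatment of coincidences $D_m=A_n$: you correctly observe that a simultaneous arrival and departure would produce no jump of $Q$, so that $\cN_\pm(Q)$ would miss that point, and you dispose of this in the $M/M/1$ case by a density argument. The paper's proof simply ignores this, but the paper itself later acknowledges (\S4.3) that in the geometric model such coincidences do occur and prevent the reconstruction of $(\cA,s)$ from $Q$; this is precisely why the geometric Burke theorem is proved via the zigzag process rather than via $Q$. So the lemma is really being used only in the exponential setting, where your a.s.\ argument applies, and your explicit flagging of this hypothesis is an improvement in rigour over the original.
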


\begin{proof}
 Recall that $Q(0) = \sum_{n \in \Z} \1_{\{A_n \leq 0 < D_n\}}$ and
that $A_0 \leq 0 < A_1$. Hence, $Q(0)=|k|$ where $k = \sup\{n \in
\Z_-\:|\: D_n \leq 0\}$. The result follows.
\end{proof}


We now interpret the variables defined above in two different
contexts: a queueing model and a storage model.

\subsection{The single-server queue}\label{sse-queue}

A bi-infinite string of customers is served at a queueing facility
with a single server. Each customer is characterized by an instant
of arrival in the queue and a service demand. Customers are served
upon arrival in the queue and in their order of arrival. Since
there is a single server, a customer may have to wait in a buffer
before the beginning of its service. Using Kendall's nomenclature,
our model is a $././1/\infty/$FIFO queue.

\begin{figure}[hbt]
\[\epsfxsize=300pt \epsfbox{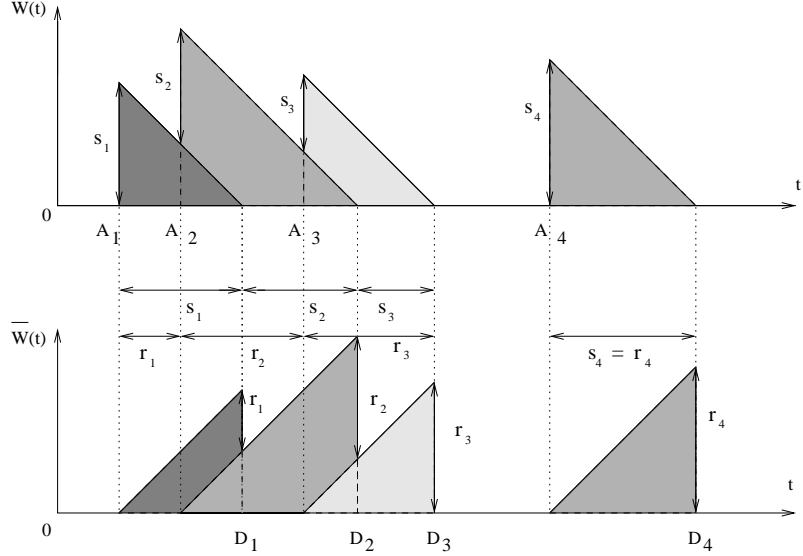} \]
\caption{The dual variables $(s_n)_n$ and $(r_n)_n$}
\label{figure1}
\end{figure}

The customers are numbered by $\Z$ according to their order of
arrival in the queue (customer 1 being the first one to arrive
strictly after instant 0).
Let $A_n$ be the instant of {\bf
A}rrival of customer $n$ and $s_n$ its {\bf S}ervice time. Then
the variables defined in (\ref{Depar.times})-(\ref{Queue-length})
have the following interpretations:
\begin{itemize}
\item $D_n$ is the instant of departure of customer $n$ from the
queue, after completion of its service;
\item $w_n$ is the waiting time of customer $n$ in the buffer
between its arrival and the beginning of its service;
\item $Q(t)$ is the number of customers in the
queue at instant $t$ (either in the buffer or in service);
$Q=(Q(t))_t$ is called the {\em queue-length} process;
\item $r_n$ is the time
spent by customer $n$ at the {\em very back} of the queue.
\end{itemize}

The variables $(r_n)_n$ are less classical in queueing theory
although they have already been considered \cite{PrGa}. They
should be viewed as being dual to the services $(s_n)_n$ as
illustrated in Figure \ref{figure1}. On the upper part of the
figure, we have represented the {\em workload} process $(W(t))_t$,
where $W(t)$ is the waiting time of a virtual customer arriving at
instant $t$ (see \eref{eq-workload} for the formal definition).

\subsection{The storage model}\label{sse-storage}

Some product $P$ is supplied, sold and stocked in a store in the following
way. Events occur at integer-valued epochs, called {\em slots}. At
each slot, an amount of $P$ is supplied and an amount of $P$ is
asked for by potential buyers. The rule is to meet all the demand,
if possible. The demand of a given slot which is not met {\em is
lost}. The supply of a given slot which is not sold {\em is
not lost} and is stocked for future consideration.

\medskip

Let $s_n$ be the amount of $P$ {\bf S}upplied at slot $n+1$, and let $a_n$
be the amount of $P$ {\bf A}sked for at the same slot. The variables in
(\ref{Depar.times})-(\ref{Queue-length}) can be interpreted in
this context:
\begin{itemize}
\item $w_n$ is the level of the stock at the end of slot $n$. It
evolves according to (\ref{Reclindley});
\item $r_n$ is the demand met at slot $n+1$, see equation (\ref{back2}); it
is the amount of $P$ departing at slot $n+1$;
\end{itemize}
The variables $(D_n)_n$ and the function $Q$ do not have a natural
interpretation in this model.

\medskip

The evolution of the store is summarized in Figure \ref{fi-store}.
The indices may seem weird ($s_n,a_n,r_n$, for time slot $n+1$).
They were chosen that way to get better looking formulas in \S
\ref{se-trans}.

\begin{figure}[hbt]
\[\epsfxsize=160pt \epsfbox{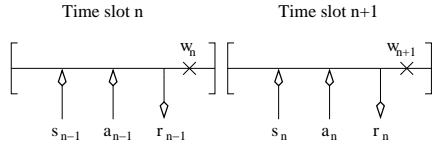} \]
\caption{The storage model}
\label{fi-store}
\end{figure}

It is important to remark that while the equations driving the
{\em single server queue} and the {\em storage model} are exactly
the same, it is not the same variables that make sense in the two
models. The important variables are the ones corresponding to the
departures from the system. The departures are coded in the
variables $ (D_n)_n$ for the single server queue and in the
variables $(r_n)_n$ for the storage model. On the other hand,
interpreting the variables $(r_n)_n$ in the single server queue or
the variables $(D_n)_n$ in the storage model is not so immediate.

\medskip

To summarize, the output variables $ (D_n)_n$ and $(r_n)_n$ are
equally relevant, but in different modelling contexts. 

\medskip

Observe that it would be possible to describe the above storage model
with a ``queueing'' terminology (a queue with slotted time, batch
arrivals ($s_n$)
and batch services ($a_n$)). It is the description used in \cite{BeAz} for
instance.  We have avoided it on purpose to clearly separate the
models of \S \ref{sse-queue} and \S \ref{sse-storage} and therefore
minimize the possibility of confusion.

\section{Equilibrium Behavior: Around Burke's Output Theorem}

We consider the exponential or geometric version of our model in
equilibrium. We prove a Burke's type result: $(\cD,r)\sim
(\cA,s)$. The relevant result for the sequence of departures in
one of the two models will then follow by forgetting one of the
two variables. In the exponential case, our proof uses the
queue-length process $Q$ following \cite{reic}. Our method of
proof in the geometric case is original and based on the zigzag
process, a symmetrized version of the workload process. A
discussion of the literature and of the increment of the present
version is carried out in \S \ref{sse-comm}.

\subsection{Output theorem in the exponential case} \label{sse-exp}

Let $\cA$ be an homogeneous Poisson process of intensity
$\lambda\in \R_+^*$. We set $ \cA=(A_n)_{n\in \Z}$ with
$A_0<0<A_1$. Recall that $a_n=A_{n+1}-A_n$. Then $(a_n)_{n\geq 1}$
is a sequence of i.i.d. r.v.'s with exponential distribution of
parameter $\lambda$. Let $s=(s_n)_{n\in \Z}$ be a sequence of
i.i.d. random variables, independent of $\cA$, with exponential
distribution of parameter $\mu$. We assume that $\lambda < \mu$.

\medskip

Consider the marked point process $(\cA,s)$ as being the
input of the model of Section \ref{se-model}. The sequence
$(w_n)_n$ is a random walk valued in $\R_+$ with a
barrier at $0$. Under the {\em stability} condition $\lambda <
\mu$, this random walk has a negative drift. It implies that the
random variables $D_n$ defined in
(\ref{Depar.times}) are indeed almost surely
finite. 

\medskip

We now prove Theorem \ref{burke1}. 
Figure \ref{fi-store} provides a visual illustration of the
result. The principle of the proof is the same as in Reich~\cite{reic}.
However, the adaptation is not completely immediate. Some care is
necessary in dealing with the indices, to make sure that
the variable $r_n$ is the mark of the point $D_n$.

\begin{theorem}\label{burke1}
The marked point process $(\cD,r)$ has the same law as the marked
point process $(\cA,s)$.
\end{theorem}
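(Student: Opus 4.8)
The plan is to follow Reich's classical argument for the exponential case, which establishes Burke's theorem by exploiting the reversibility of the queue-length process $Q$, and to upgrade it so that it tracks the marks $r_n$ as well as the departure epochs $D_n$. The starting observation is that in the $M/M/1$ setting $Q=(Q(t))_t$ is a stationary, time-reversible, continuous-time Markov chain on $\N$ (a birth-and-death chain): its stationary distribution is geometric with ratio $\lambda/\mu$, and it satisfies detailed balance. Hence $Q$ and its time-reversal $\cR\circ Q$ have the same law. By Lemma \ref{numbering}, the up-jumps of $Q$ form the point process $\cA$ and the down-jumps form $\cD$; under time reversal the up-jumps and down-jumps swap roles, so at the level of point processes $\cR\circ Q$ has up-jump process $\cR(\cD)$ and down-jump process $\cR(\cA)$. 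Equality in law of $Q$ and $\cR\circ Q$ therefore already gives $\cD\sim\cA$, the classical Burke statement. The content of Theorem \ref{burke1} is the joint statement with the marks, so the real work is to identify what the marks become under reversal.

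The key step is to recognize the service times $s_n$ inside the reversed picture. Here I would use the dual variables: between the up-jump at time $A_n$ and the next up-jump at $A_{n+1}$, the process $Q$ makes some down-jumps; the quantity $r_n=\min(D_n,A_{n+1})-A_n$ defined in \eref{reverserv.times} measures, on the forward timeline, the length of the initial sub-interval of $[A_n,A_{n+1})$ during which customer $n$ sits at the very back of the queue — equivalently, by \eref{back2}, the amount $\min\{w_n+s_n,a_n\}$. I claim that when one reads $Q$ backwards, the gaps between consecutive down-jumps of $\cR\circ Q$ (which are now the "arrivals" of the reversed chain) carry exactly the law of an i.i.d. exponential($\mu$) sequence, and that the "service" attached to the $n$-th reversed customer — the span it spends at the back of the reversed queue — is precisely $r_{-n}$. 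Concretely, I would show, using the explicit formulas \eref{backD}–\eref{back2} together with the memorylessness of the exponential, that $(\cR(\cD),\cR(r))$ is obtained from $\cR\circ Q$ by exactly the same deterministic maps $(\cN_+,\mrm{back-waiting-time})$ that produce $(\cA,s)$ from $Q$. Since $\cR\circ Q\sim Q$, this yields $\cR(\cD,r)\sim(\cA,s)$ as marked point processes; applying $\cR$ once more and using that $\cR$ preserves the law of a stationary marked point process gives $(\cD,r)\sim(\cA,s)$.

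I would organize the write-up as: (i) recall/verify that $Q$ is a stationary reversible Markov chain and $Q\sim\cR\circ Q$; (ii) prove the index-matching lemma asserting that the mark $r_n$ is indeed attached to the point $D_n$ and that, after reversal, the pair (gap to next down-jump, time at the back) of the reversed chain has exactly the joint law of $(a,s)$ — this is where the paper already warns that "some care is necessary in dealing with the indices"; (iii) conclude by applying $\cR$. The main obstacle is step (ii): one must check that the functional of $Q$ that extracts $s_n$ forward-in-time coincides, under the relabelling $n\mapsto -n$ forced by $\cR$, with the functional that extracts $r_n$ — i.e. that $r_n$ really is the reversed-time analogue of a service. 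This requires carefully unwinding \eref{eq-recd}, \eref{reverserv.times}, \eref{Devolindley}, and the boundary identity $D_{-Q(0)}\le 0<D_{-Q(0)+1}$ of Lemma \ref{numbering} to pin down the origin of the index count on both timelines, and then invoking the strong Markov property and the lack-of-memory of the exponential to get the independence and the marginal laws right. Once the index bookkeeping is correct, the distributional identity is immediate from reversibility of $Q$.
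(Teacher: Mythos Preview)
Your approach is essentially the paper's: exhibit a deterministic map $\varphi$ with $(\cA,s)=\varphi(Q)$, check that $\varphi\circ\cR(Q)=\cR(\cD,r)$ via the index bookkeeping of Lemma~\ref{numbering} (the paper does this by writing $s_n=D_n-\max(A_n,D_{n-1})$ in terms of the jump operators $\cT_n\circ\cN_\pm$ and then applying the same operator to $\cR(Q)$, obtaining exactly $r_{-n+Q(0)+1}$), and conclude from $Q\sim\cR(Q)$. Two small points: the identity in step~(ii) is purely \emph{pathwise}, so no appeal to the strong Markov property or memorylessness is needed beyond the reversibility of $Q$ itself; and in your final step, $\cR$ does \emph{not} preserve the law of a general stationary marked point process---you need that $(\cA,s)$ is specifically a Poisson process with i.i.d.\ marks, hence reversible, to pass from $(\cD,r)\sim\cR(\cA,s)$ to $(\cD,r)\sim(\cA,s)$.
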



\begin{proof}
Knowing $Q$, one can recover the input of the model:
$(A_n,s_n)_n =\varphi(Q)$. We are going to prove that

$$\varphi\circ \cR (Q) = (-D_{-n+Q(0)+1},r_{-n+Q(0)+1})_n=\cR(\cD,r)$$

Let $\cT_n$ be the operator mapping a point process to its
$n$-th point, with the convention that $\cT_0(\cdot) \leq 0 <
\cT_1(\cdot )$. For instance, $A_n=\cT_n(\cA)=\cT_n\circ\cN_+(Q)$.
Taking into
account the shift in the indices and using Lemma \ref{numbering},
we get that $\cT_n (\cD)=\cT_{n}\circ\cN_-(Q)= D_{n-Q(0)}$. So we have

\begin{eqnarray*}
s_n & = &  D_n-\max(A_n,D_{n-1}) \\
& = &  \cT_{n+Q(0)}\circ\cN_-(Q)-\max(\cT_n\circ\cN_+(Q),\cT_{n+Q(0)-1}\circ\cN_-(Q))\:.
\end{eqnarray*}

Set $s_n=\cS_n(Q)$. Now let us apply the operators $\cT_n\circ\cN_+$
and $\cS_n$ to the reversed process $\cR(Q)$:

\begin{equation*}
\cT_n\circ\cN_+\circ\cR(Q) =  -D_{-n-Q(0)+1}
\end{equation*}
Furthermore $\cT_{n+\cR(Q)(0)-1}\circ\cN_-\circ\cR(Q) =
-A_{-n-Q(0)+2}$.
Thus,
\begin{eqnarray}
\cS_n\circ\cR(Q)&=&  -A_{-n-Q(0)+1}\: -\:\max(\:-D_{-n-Q(0)+1},\:
              -A_{-n-Q(0)+2}\:) \nonumber \\
              &=&   - A_{-n+Q(0)+1}\:+\: \min(\:D_{-n+Q(0)+1},\:
A_{-n+Q(0)+2}\:)\nonumber \\
              &=& r_{-n+Q(0)+1}\:. \label{cs-r}
\end{eqnarray}
Hence we obtain that $\varphi\circ\cR(Q)=\cR(\cD,r)$.

The process $Q$ is a stationary birth-and-death
process, hence reversible: $\cR(Q)\sim Q$. It implies that
$(\cA,s)=\varphi(Q)\sim \varphi\circ\cR(Q)=\cR(\cD,r)$.
Hence, $\cR(\cD,r)$ is an homogeneous Poisson process marked with an i.i.d.
 sequence of
r.v.'s. So, $\cR(\cD,r)\sim (\cD,r)$, which concludes the proof.
\end{proof}

\begin{corollary}
In the queueing model, the departure process $\cD$ is a Poisson
process of intensity
$\lambda$. In the storage model, the sequence $(r_n)_n$ of the
amounts of product $P$ departing at successive slots, is a
sequence of i.i.d. exponential r.v.'s of parameter $\mu$.
\end{corollary}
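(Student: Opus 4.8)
The plan is to obtain both assertions as immediate marginalizations of the joint identity $(\cD,r)\sim(\cA,s)$ furnished by Theorem \ref{burke1}; essentially no new work is required beyond projecting the marked point processes onto their two natural components. A marked point process determines its underlying unmarked point process through the measurable map that discards the marks, so applying this map to both sides of $(\cD,r)\sim(\cA,s)$ yields $\cD\sim\cA$ as point processes. By the standing hypotheses of \S\ref{sse-exp}, $\cA$ is a homogeneous Poisson process of intensity $\lambda$; hence so is $\cD$, which is the first claim. Here the precise numbering of the points of $\cD$ plays no role, since the law of a point process is insensitive to the shift-ambiguous indexing of its points.

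Symmetrically, for the store I would apply the complementary map, which retains only the sequence of marks and forgets the point locations; this gives $(r_n)_n\sim(s_n)_n$ as $\R_+^*$-valued sequences. Since $s=(s_n)_n$ was taken to be i.i.d.\ with exponential law of parameter $\mu$ and independent of $\cA$, the sequence $(r_n)_n$ is i.i.d.\ exponential of parameter $\mu$ as well; and by the interpretation given in \S\ref{sse-storage}, $(r_n)_n$ is exactly the sequence of amounts of product $P$ departing at the successive slots. This is the second claim.

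The only delicate point — bookkeeping rather than a genuine obstacle — is to keep the index shift of Lemma \ref{numbering} straight: in $(\cD,r)$ the mark $r_n$ is attached to the point $D_n$, and $D_{-Q(0)}\le 0<D_{-Q(0)+1}$, so as a genuine marked point process $(\cD,r)$ carries a global index offset relative to the convention $\cT_0\le 0<\cT_1$. Because Theorem \ref{burke1} asserts equality of the full joint laws, and because the law of a marked point process does not change under a global translation of the indices, both projections above are well defined, and the corollary follows at once.
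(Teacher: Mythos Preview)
Your proposal is correct and matches the paper's intent: the corollary is stated without proof precisely because it follows at once from Theorem~\ref{burke1} by projecting $(\cD,r)\sim(\cA,s)$ onto its two components, exactly as you do. Your discussion of the index shift is accurate but more detailed than necessary here, since the equality of laws in Theorem~\ref{burke1} already absorbs that bookkeeping.
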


\subsection{Output theorem in the geometric case} \label{se-geom}

Let $\cA$ be a Bernoulli point process of parameter $p\in (0,1)$,
that is: all the points are integer valued, there is a point at a
given integer with probability $p$, and the presence of points at
different integers are independent. As before, set $\cA=(A_n)_n$
with $A_0\leq 0 < A_1$ and $a_n=A_{n+1}-A_n$. Then the sequence
$(a_n)_{n\geq 1}$ is a sequence of i.i.d. geometric r.v.'s with
parameter $p$ ($\forall k\in \N^*$, $P\{ a_1 = k \}
=(1-p)^{k-1}p$). Let $(s_n)_n$ be a sequence of i.i.d. geometric
r.v.'s with parameter $q\in (0,1)$, independent of $\cA$. We
assume that $p<q$ (stability condition).

\medskip

Let the marked point process $(\cA,s)$ be the input of the model
of Section \ref{se-model}. As in \S \ref{sse-exp}, the model is
stable and the output $(\cD,r)$ is a marked point process.

 We prove a Burke-type Theorem using an original approach based on
 a symmetrization of the workload
 process.


Define the (random) c\`adl\`ag maps
$f_n,g_n: \Omega\times \R \rightarrow \R$, by
\[
f_n (t) = \begin{cases} 0 & \mrm{if } t<A_n \\
                        D_n - t & \mrm{if } A_n \leq t \leq D_n
                        \\
0 & \mrm{if } t > D_n
\end{cases}, \qquad
g_n (t) = \begin{cases} 0 & \mrm{if } t<A_n \\
                        t-A_n & \mrm{if } A_n \leq t \leq D_n
                        \\
0 & \mrm{if } t > D_n
\end{cases}\:.
\]
Set
\begin{equation}\label{eq-workload}
W(t)=\bigvee_{n\in \Z} f_n(t), \qquad \overline{W}(t)=\bigvee_{n\in
  \Z} g_n(t) \:.
\end{equation}
Since $D_n=A_n+w_n+s_n$, we have $W(A_n^-)=w_n$. For the queueing model, $W(t)$ is the total amount
of service remaining to be done by the server at instant $t$, and
$W=(W(t))_t$ is called the {\em workload} process. The process $\overline{W}=(\overline{W}(t))_t$ is the dual of the process
  $W$, see Figure \ref{figure1} or Figure \ref{fi-zigzag} for an
illustration.

Looking at $W$ and $\overline{W}$ together is a way to symmetrize the workload
process. Consider the ordered pair of processes
$P=(W,\overline{W})$. Define
\[
\cR(P)=(\cR(\overline{W}),\cR(W))\]
(observe that the first element of the pair
is now $\cR(\overline{W})$, as opposed to $W$ for the pair $P$).
Obviously, we can recover the input of the model knowing $P$: $(\cA,s)
= \Psi(P)$. Also it is clear (see Figure \ref{figure1} or Figure \ref{fi-zigzag})
that: $\cR(\cD,r) = \Psi\circ\cR (P)$.

The crux of the argument in proving a Burke-type Theorem is now the next
Lemma.

\begin{lemma}\label{le-revers}
The ordered pair of the workload process and its dual is
reversible, that is: $(W,\overline{W})\sim (\cR(\overline{W}),\cR(W))$.
\end{lemma}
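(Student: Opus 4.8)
The plan is to realize the pair $(W,\overline{W})$ as (the positive-jump/negative-jump data of) a single Markov chain whose stationary version is reversible, then read off the claimed identity in distribution as the statement that this chain is invariant under time reversal. Concretely, I would track the \emph{zigzag process} $Z(t)$, a symmetrized workload obtained by gluing together $W$ and $-\overline{W}$: roughly, between an arrival $A_n$ and the corresponding departure $D_n$ the process $W$ decreases linearly at rate $1$ from $w_n+s_n$ down to $w_n$, while $\overline{W}$ increases linearly at rate $1$ from $0$ up to $s_n+w_n-r_n$ on the relevant interval; the arrivals are exactly the upward unit jumps of $W$ (of sizes $s_n$) and the departures are the downward jumps, and dually for $\overline{W}$. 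The key structural point is \eref{eq-rel0}, $r_n+d_n=a_n+s_{n+1}$, which is precisely what makes $W$ and $\overline{W}$ fit together into one piecewise-linear process with slopes $\pm 1$ whose up/down excursion data encode $(\cA,s)$ on one side and $(\cD,r)$ on the other. Reversing time in $Z$ swaps the roles of up-excursions and down-excursions, i.e. swaps the pair $(W,\overline{W})$ with $(\cR(\overline{W}),\cR(W))$, which is exactly $\cR(P)$.

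The heart of the argument is then a reversibility statement for $Z$ (or for the discrete skeleton watched at integer slots). In the geometric setting all the relevant quantities live on $\Z$: the level of the stock $w_n$ evolves by the Lindley recursion \eref{Reclindley} $w_{n+1}=[w_n+s_n-a_n]^+$ with i.i.d.\ geometric increments, and under the stability condition $p<q$ this has a (geometric) stationary distribution. I would introduce the state of the zigzag chain at slot $n$ to be a pair recording the current stock level together with an auxiliary coordinate carrying the residual service/residual demand, chosen so that the one-step transition kernel is that of a genuine Markov chain on a countable state space. The plan is to write down this kernel explicitly from \eref{eq-recd}, \eref{back2} and \eref{eq-rel0}, compute its stationary measure $\pi$ (a product of geometrics, by the same computation that gives the stationary law of $w_n$ and uses independence of the $s_n$), and then verify the detailed-balance equations $\pi(x)P(x,y)=\pi(y)P(y,x)$ directly. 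Because the geometric distribution is the discrete memoryless law, the residual-service coordinate at a slot is again geometric and independent of the stock level, which is what makes detailed balance hold; this mirrors the role played by the birth-and-death reversibility of $Q$ in the exponential proof of Theorem \ref{burke1}.

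Having established that the zigzag chain is stationary and reversible, I would conclude by translating this back to the continuous-time picture: reversibility of the chain says its trajectory has the same law forwards and backwards in time, and since forward trajectories produce $(W,\overline{W})$ from the up/down jump data while backward trajectories produce $(\cR(\overline{W}),\cR(W))$ from the same data with up and down interchanged, we get $(W,\overline{W})\sim(\cR(\overline{W}),\cR(W))$. The main obstacle I anticipate is purely bookkeeping: pinning down the correct state space and the exact index conventions (the paper has already warned that "some care is necessary in dealing with the indices, to make sure that the variable $r_n$ is the mark of the point $D_n$"), so that the time-reversal of the chosen chain really does implement the map $P\mapsto\cR(P)$ \emph{with the marks matched up correctly} — the $+Q(0)$-type shifts that appeared in the proof of Theorem \ref{burke1} will reappear here. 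The analytic content (computing $\pi$, checking detailed balance) should be a short explicit calculation with geometric distributions once the state space is right.
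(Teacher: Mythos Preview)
Your high-level strategy is right: the zigzag process $Z$ is indeed the key object, and its time-reversibility is exactly what delivers $(W,\overline{W})\sim(\cR(\overline{W}),\cR(W))$. But your mechanism for \emph{proving} that reversibility differs from the paper's and, as written, has a genuine gap. The paper first decomposes the line into alternating idle and busy periods; idle-period lengths are i.i.d.\ geometric by memorylessness and independent of the busy periods, so it suffices to prove reversibility of $Z$ on a single busy period. There the paper does \emph{not} set up a Markov chain or check detailed balance. It simply identifies $Z$ with the finite sequence $(l_1,\ldots,l_{2k})$ of up/down run-lengths and computes
\[
P\{Z=(l_1,\ldots,l_{2k})\}=(1-p)^{L-k}p^k(1-q)^{L-k}q^{k-1},
\]
which depends only on $(L,k)$. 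Hence all admissible trajectories with given $(L,k)$ are equiprobable, so $Z\sim\cR(Z)$. That is the whole argument.

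Your plan instead is to augment $Z$ into a Markov chain and verify $\pi(x)P(x,y)=\pi(y)P(y,x)$. The paper explicitly notes that $Z$ is ``reversible (although not Markovian)'', so the augmentation you gesture at is essential, not cosmetic---and you leave it unspecified. To make your route go through you would have to (i) exhibit the auxiliary coordinate so that the resulting chain is genuinely Markov, (ii) compute its stationary law, and crucially (iii) check that time reversal of the \emph{augmented} chain still induces the map $(W,\overline{W})\mapsto(\cR(\overline{W}),\cR(W))$ on the underlying processes, not some other involution. None of these three steps is carried out, and (iii) in particular is delicate (the extra coordinate must itself transform correctly under reversal). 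The busy-period reduction, which is what makes the paper's argument finite and two lines long, is also missing from your outline. In short: right object, right target identity, but the detailed-balance route is both harder and incomplete, whereas the paper's equiprobability argument bypasses the Markov question entirely.
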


\begin{proof}
Clearly, $W$ and $\overline{W}$ are (time) stationary and ergodic processes.
An {\em idle period} is a maximal non-empty time interval
during which $W(t)=0$ (or equivalently, $\overline{W}(t)=0$). The {\em busy
  periods} are the time intervals
between the idle periods.
Hence $\R$ is partitioned by an alternating sequence of busy and idle
periods.

Let $(i_n)_{n\in \Z}$ and $(b_n)_{n\in \Z}$ be the lengths of
the successive idle periods and busy periods respectively. Clearly,
the sequences $(i_n)_n$ and $(b_n)_n$ are independent and
i.i.d. Moreover, it follows from the memoryless property of the
geometric distribution that $i_n$ is a geometric distribution of
parameter $p$.
So, to prove the reversibility of $(W,\overline{W})$, it is enough to
prove the
reversibility within a single busy period.

To do so, the easiest is to introduce an auxiliary process. Consider a
busy period which is assumed to consist, for simplicity, of the
$k$ customers numbered from 1 up to $k$.
Set for $n\in \{1,\dots k\}$:
\begin{equation}
C_n = B_n + s_n, \quad B_{n+1}=C_{n} + a_{n}\:.
\end{equation}
This defines $(B_n)_n$ and $(C_n)_n$ up to a translation. Define
\begin{equation}
\label{eq-zigzag}
Z(B_1)=0, \qquad Z(t) = \begin{cases} Z(B_n) + (t-B_n) & \text{for } t 
\in [B_n,C_n) \\
\bigl[Z(C_n) - (t-C_n) \bigr]^+ & \text{for } t \in [C_n,B_{n+1})
\end{cases}\:.
\end{equation}
On an interval of type $[B_n,C_n)$, we have $dZ/dt=1$ and on an
interval of type $[C_n,B_{n+1})$, we have $dZ/dt=(-1)\1_{Z>0}$.

Now, on a busy period, the process $Z$ is in
bijection with $(W,\overline{W})$, and the time-reversed process $\cR(Z)$ is in bijection with
$(\cR(\overline{W}),\cR(W))$.
This is illustrated in Figure \ref{fi-zigzag}, where we have
represented a trajectory of
$(W,\overline{W})$ and the corresponding trajectory of $Z$. The
process $Z$ is obtained by  gluing together the slices of
$\overline{W}$ and $W$ alternatively.

\begin{figure}[hbt]
\[\epsfxsize=400pt \epsfbox{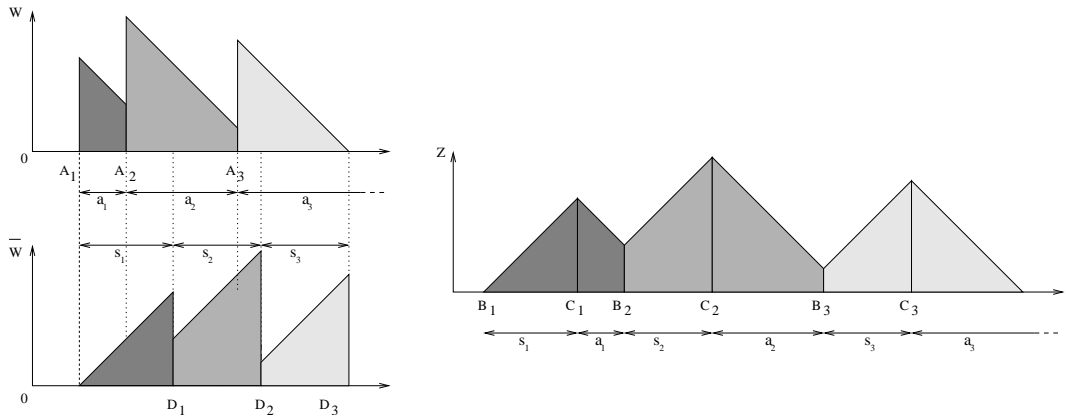} \]
\caption{The workload process, the dual workload process, and the zigzag process} \label{fi-zigzag}
\end{figure}

Hence, proving the reversibility of $(W,\overline{W})$ on a busy
period, is equivalent to proving the reversibility of $Z$.
We may call $Z$
the {\em zigzag} process. This process was considered (with a different
 motivation) in \cite{HMOc} and in particular, it was proved there
 that the zigzag process is reversible (although not Markovian). We recall 
the
 argument for completeness.

Identify $Z$ with the finite sequence of the lengths of its intervals of
increase and decrease. Let $(l_1,\dots ,l_{2k})\in (\N^*)^{2k}$,
with  $\sum_{i} l_{2i} = \sum_{i} l_{2i+1}=L$, be a possible value
for $Z$. (The total length of the periods of increase has to be equal to the total
length of the periods of decrease).
Recall that the variables $(s_n)_n$ and $(a_n)_n$ are
geometrically distributed of respective parameters $q$ and $p$. We
get:
\begin{eqnarray*}
P\bigl\{Z=(l_1,\dots ,l_{2k}) \bigr\} & = & (1-p)^{l_1-1} p (1-q)^{l_2-1}q \cdots
(1-p)^{l_{2k-1}-1}p (1-q)^{l_{2k}-1} \\
& = & (1-p)^{L-k} p^k
(1-q)^{L-k}q^{k-1}\:.
\end{eqnarray*}
Hence, given $L$ and $k$, all the different possible trajectories for
$Z$ are equiprobable. It implies that the same is true for the
time-reversed process $\cR(Z)$.
This completes the proof.
\end{proof}

We are now ready to prove the following result:

\begin{theorem}\label{burke-geom}
The marked point process $(\cD,r)$ has the same law as the marked
point process $(\cA,s)$.
\end{theorem}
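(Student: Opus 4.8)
The plan is to run the same argument as in Theorem~\ref{burke1}, but with the zigzag/workload data of Lemma~\ref{le-revers} replacing the birth-and-death queue-length process. The two ingredients are already in place: first, the map $\Psi$ recovers the input, $(\cA,s)=\Psi(P)$ with $P=(W,\overline{W})$; second, reversing $P$ in the sense of the Lemma produces the reversed output, $\cR(\cD,r)=\Psi\circ\cR(P)$. Combining these with the reversibility $P\sim\cR(P)$ of Lemma~\ref{le-revers} gives
\[
(\cA,s)=\Psi(P)\ \sim\ \Psi\circ\cR(P)=\cR(\cD,r).
\]
So it remains only to check that $\cR(\cD,r)\sim(\cD,r)$, i.e.\ that the marked point process $(\cD,r)$ is itself reversible in law; this follows once we identify its distribution as a Bernoulli point process carrying an i.i.d.\ geometric mark sequence, because such an object is manifestly invariant under $\cR$.

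The key steps, in order, are as follows. (1) Verify carefully that $\cR(\cD,r)=\Psi\circ\cR(P)$ by reading off $\cD$ and $r$ from the process $\cR(P)=(\cR(\overline W),\cR(W))$ exactly as $\cA$ and $s$ are read off from $(W,\overline W)$; here one must track the index shift (the analogue of the $Q(0)$ shift in Theorem~\ref{burke1}) coming from the choice of origin, so that $r_n$ stays the mark attached to the point $D_n$. Figures~\ref{figure1} and~\ref{fi-zigzag} make the geometric content transparent: reversing time turns the left slopes of the zigzag into right slopes and vice versa, which is precisely the swap of the roles of $W$ and $\overline W$ together with that of $s$ and $r$, and of $\cA$ and $\cD$. (2) Invoke Lemma~\ref{le-revers} to get $(\cA,s)\sim\cR(\cD,r)$. (3) Argue that $(\cA,s)\sim(\cD,r)$: since $(\cA,s)$ is a Bernoulli$(p)$ point process marked by i.i.d.\ geometric$(q)$ variables, and this whole package is symmetric under reversal of the time axis, $\cR(\cD,r)\sim(\cD,r)$, and hence $(\cD,r)\sim(\cA,s)$.

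The main obstacle is Step~(1): making the bookkeeping of indices and origins rigorous. The subtlety is the same one flagged before Theorem~\ref{burke1}, namely ensuring that after time reversal the $n$-th departure point of $\cR(\cD,r)$ is matched with the correct $r$-mark rather than an off-by-one neighbour, and that the "$0<A_1$'' type normalization is respected on both sides. One way to make this clean is to restrict, as in the proof of Lemma~\ref{le-revers}, to a single busy period consisting of customers $1,\dots,k$: within such a period there is no idle-time ambiguity, the bijection between $Z$ and $(W,\overline W)$ is explicit, and reading off $(\cA,s)$ versus $(\cD,r)$ from $Z$ versus $\cR(Z)$ is a finite, checkable identity; then one glues busy periods back together using the independence and i.i.d.\ structure of the idle periods (geometric$(p)$), already established in Lemma~\ref{le-revers}. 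Everything else — the passage from reversibility of $(W,\overline W)$ to the distributional identity, and the final symmetry argument — is routine.
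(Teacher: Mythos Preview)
Your proposal is correct and follows essentially the same route as the paper: use $\Psi(W,\overline W)=(\cA,s)$ and $\Psi\circ\cR(W,\overline W)=\cR(\cD,r)$, invoke Lemma~\ref{le-revers} to get $(\cA,s)\sim\cR(\cD,r)$, and finish using the reversibility of $(\cA,s)$ as a Bernoulli process with i.i.d.\ marks. One small wording issue: in your first paragraph you say $\cR(\cD,r)\sim(\cD,r)$ ``follows once we identify its distribution as a Bernoulli point process carrying an i.i.d.\ geometric mark sequence'' --- that is what you are proving, so phrase it instead (as you do correctly in your Step~(3)) via $(\cD,r)\sim\cR(\cA,s)\sim(\cA,s)$, using the known reversibility of $(\cA,s)$.
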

%
%
%
\begin{proof}
The proof is similar to the one of Theorem \ref{burke1} with $(W,\overline{W})$
playing the role of $Q$.

Recall that $(\cA,s) = \Psi (W,\overline{W})$ and $\cR(\cD,r) =
\Psi\circ\cR (W,\overline{W})$.
According to Lemma \ref{le-revers}, we have
\[
(\cA,s) = \Psi (W,\overline{W})
\sim \Psi\circ\cR (W,\overline{W}) = \cR(\cD,r)\:.
\]
We conclude that $(\cD,r)\sim \cR(\cA,s)\sim (\cA,s)$, where the last
equivalence comes from the fact that $(\cA,s)$ is a Bernouilli process
with i.i.d. marks, hence is reversible.
\end{proof}


\subsection{Comments on the different proofs of Burke
Theorem}\label{sse-comm}

Reflecting on the above, there are three different ways to prove
Burke Theorem, be it in the exponential or geometric case. The
first way is by using analytic methods, the second is by using the
reversibility of the queue length process $Q$, and the third is by
using the reversibility of the zigzag process $Z$.

The original proof of Burke is for the exponential model using
analytic methods \cite{burk}. For the geometric model,
an analytic proof was given by Azizo\~{g}lu
and Bedekar \cite{BeAz}. For the exponential model, the idea of
using the reversibility of $Q$ to get the result is due to Reich
\cite{reic}. This proof has become a cornerstone of queueing
theory, it has been extended to various contexts and has given
birth to the concept of {\em product form networks}
\cite{brem,kell79}. Reich's proof does not translate directly to
the geometric model. Of course, $(Q(n))_{n\in \Z}$ is a reversible
birth-and-death Markov chain. However, a difficulty arises: It is
not possible to reconstruct $\cA$ and $s$ from $Q$. Indeed, on the
event $\{Q(n-1)= Q(n)>0\}$, two cases may occur: there is either
no departure and no arrival at instant $n$, or one departure and
one arrival; and it is not possible to distinguish between them
knowing only $Q$. One feasible solution is to add an auxiliary
sequence that contains the lacking information but the details
become quite intricate. This program has been carried out in
\cite{HsBu}, see also \cite[Theorem 4.1]{KOR}, for a variant: the geometric model with
{\em unused services}. The above idea of using the zigzag process to
prove Burke Theorem is original. Observe that the zigzag process is
also clearly reversible in
the exponential model. Hence the proof used for Theorem
\ref{burke-geom} could also be used to get Theorem \ref{burke1}.

In the original references \cite{burk,reic} and in all the
classical textbooks presenting the result \cite{brem,kell79,robe03},
the version proved is:
$\cA \sim \cD$: (``Poisson Input Poisson Output''). In
\cite{BeAz}, or \cite{HsBu} for the variant with {\em unused services}, 
the result proved is $s\sim r$. The complete
version $(\cA,s) \sim (\cD,r)$ appears first in \cite[Theorem
3]{OY02} for the exponential model with {\em unused
services}. This is extended to the geometric model with {\em
unused services} in \cite{KOR}. Brownian analogues are proved in
\cite{HaWi,OY02b}.

\subsection{Non-colliding random walks}

Following the lines of thought in \cite{OY02,KOR,HMOc}, it is possible to use
Theorems \ref{burke1} and \ref{burke-geom} to get representation
results for non-colliding random walks.

We start by stating a preliminary lemma.
\begin{lemma}
The Lindley's Equation \eref{Reclindley} admits a backward analogue:
\begin{equation}\label{rec-lindley}
w_n=(w_{n+1}+r_n-d_{n-1})^+\:.
\end{equation}
\end{lemma}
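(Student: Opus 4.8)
The plan is to verify the identity $w_n = (w_{n+1} + r_n - d_{n-1})^+$ directly from the definitions already set up in Section \ref{se-model}, exploiting the three relations $r_n + d_n = a_n + s_{n+1}$ (equation \eref{eq-rel0}), $r_n = s_n + w_n - w_{n+1}$ (the second formula in \eref{back2}), and the forward Lindley recursion $w_{n+1} = [w_n + s_n - a_n]^+$ (equation \eref{Reclindley}). Everything is deterministic and pathwise, so no probabilistic input is needed; the only subtlety is the interaction of the two $(\cdot)^+$ operations, which is where I expect the real work to be.

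First I would compute $w_{n+1} + r_n - d_{n-1}$ in terms of the input variables. Shifting the index in \eref{eq-rel0} gives $r_{n-1} + d_{n-1} = a_{n-1} + s_n$, hence $d_{n-1} = a_{n-1} + s_n - r_{n-1}$. Also, from \eref{back2} applied at index $n-1$, $r_{n-1} = s_{n-1} + w_{n-1} - w_n$. Substituting, $d_{n-1} = a_{n-1} + s_n - s_{n-1} - w_{n-1} + w_n$. Therefore
\[
w_{n+1} + r_n - d_{n-1} = w_{n+1} + r_n - a_{n-1} - s_n + s_{n-1} + w_{n-1} - w_n .
\]
Now use $r_n = s_n + w_n - w_{n+1}$ from \eref{back2} to cancel $w_{n+1}$, $w_n$, and $s_n$, leaving
\[
w_{n+1} + r_n - d_{n-1} = w_{n-1} + s_{n-1} - a_{n-1}.
\]
The right-hand side is exactly the quantity whose positive part is $w_n$ by the forward recursion \eref{Reclindley} at index $n-1$: indeed $w_n = [w_{n-1} + s_{n-1} - a_{n-1}]^+$. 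Hence $w_n = (w_{n+1} + r_n - d_{n-1})^+$, as claimed.

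The step I expect to be the main obstacle is making sure the index bookkeeping in \eref{eq-rel0} and \eref{back2} is applied at the correct shifts — the paper's convention of attaching $s_n, a_n, r_n$ to slot $n+1$ makes it easy to be off by one, and the two $(\cdot)^+$'s must genuinely coincide rather than merely agree when $w_n>0$. The computation above shows they coincide identically because the argument of the positive part on the right telescopes exactly to $w_{n-1}+s_{n-1}-a_{n-1}$, with no residual term; so once the substitutions are lined up correctly there is nothing further to check. If one prefers to avoid \eref{eq-rel0}, an alternative is to substitute the explicit sup-formula \eref{Devolindley} for $w_n$ and $w_{n+1}$ and the formula \eref{reverserv.times} for $r_n$, but the telescoping route above is shorter and cleaner.
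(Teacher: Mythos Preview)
Your proof is correct and follows essentially the same route as the paper's: both arguments substitute \eref{eq-rel0} and \eref{back2} to reduce $w_{n+1}+r_n-d_{n-1}$ to $w_{n-1}+s_{n-1}-a_{n-1}$ and then invoke the forward Lindley recursion. The only cosmetic difference is the order of substitution (the paper first isolates an expression for $d_n$ analogous to your formula for $d_{n-1}$), but the telescoping and the conclusion are identical.
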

\begin{proof}
Recall Equation \eref{back2}: $r_n=s_n+w_n-w_{n+1}$. Using
\eref{eq-rel0}, we get
\begin{equation}\label{eq-etoile}
d_n = a_n + s_{n+1} -r_n = a_{n}-w_n+w_{n+1}-s_n+s_{n+1} \:.
\end{equation}
Therefore,
\begin{eqnarray*}
w_{n+1}+r_n-d_{n-1}&=&w_{n+1}+s_{n}+w_n-w_{n+1}-a_{n-1}-w_n+w_{n-1}-s_n+s_{n-1} \\
&=&w_{n-1}+s_{n-1}-a_{n-1} \:.
\end{eqnarray*}
Taking the maximum with $0$ on both sides of the equality and
using Lindley's equation \eref{Reclindley}, we obtain
$(w_{n+1}+r_n-d_{n-1})^+=w_n$.
\end{proof}
\begin{proposition}\label{pr-pitman}
Let the sequences of r.v.'s $(a_n)_{n\in \N^*}$ and $(s_n)_{n\in \N^*}$ be
as in \S \ref{sse-exp} or as in \S \ref{se-geom}.  The conditional law of
$(\sum_{i=1}^{n} a_i,\:\sum_{i=1}^{n}s_i)$, given that
$\{\sum_{i=1}^{k}a_i \geq \:\sum_{i=1}^{k+1}s_i,\ \forall k \geq
0\}$ is the same as the unconditional law of $(\max_{1\leq j \leq
n}\{\sum_{i=1}^{j}a_i+\sum_{i=j+1}^{n+1}s_i\},\ \min_{1\leq j \leq
n}\{\sum_{i=2}^{j}s_i+\sum_{i=j+1}^{n}a_i\})$.
\end{proposition}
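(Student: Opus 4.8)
The plan is to follow the route of \cite{OY02,KOR}, deducing the representation from the joint Burke Theorem together with a time-reversal argument, in which the backward Lindley recursion \eref{rec-lindley} just established is the engine. The starting point: by Theorem \ref{burke1} in the setting of \S\ref{sse-exp}, or by Theorem \ref{burke-geom} in the setting of \S\ref{se-geom}, the stationary model satisfies $(\cD,r)\sim(\cA,s)$; and since an i.i.d. marked sequence is reversible, one also gets $\cR(\cD,r)\sim(\cA,s)$. Equations \eref{eq-rel0} and \eref{rec-lindley} say precisely that, read backwards in time, the output $(\cD,r)$ drives the same Lindley-type dynamics that $(\cA,s)$ drives forwards: the time-reversed model is again a queue/store, now fed with $(\cD,r)$ and producing $(\cA,s)$. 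This is the mechanism that converts a boundary condition imposed on the input into an explicit path transform of the output.

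Concretely, I would localise on a single busy period. Condition on the origin falling in an idle period, relabel customers so that customer $1$ opens the busy period on its right, and keep only customers $1,\dots,n$. On the time-reversed side this event should be precisely the non-colliding event of the statement — the reversed sense of the inequality and the one-slot shift of the $s$-index being artefacts of the convention that $s_n,a_n,r_n$ refer to slot $n+1$. Within a single busy period the dynamics collapse: by \eref{eq-recd} the departures are back-to-back and by \eref{back2} every request is met, so $d_m=s_{m+1}$ and $r_m=a_m$ there. Feeding this into \eref{backD} turns the relevant departure instant into exactly the maximum displayed in the statement, and feeding it into \eref{back2} (together with \eref{eq-rel0}) turns the total store output $\sum_m r_m$ into the minimum displayed there. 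The two are automatically complementary: for the summands $f(j)=\sum_{i=1}^j a_i+\sum_{i=j+1}^{n+1}s_i$ and $g(j)=\sum_{i=2}^j s_i+\sum_{i=j+1}^n a_i$ one has $f(j)+g(j)=\sum_{i=1}^n a_i+\sum_{i=2}^{n+1}s_i$, so $\max_j f(j)+\min_j g(j)$ equals this same linear functional of the data. Transporting the equality $\cR(\cD,r)\sim(\cA,s)$ of Burke's Theorem across this correspondence then identifies the conditional law of $(\sum_{i=1}^n a_i,\sum_{i=1}^n s_i)$ with the unconditional law of the $(\max,\min)$ pair.

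The part calling for care, rather than new ideas, is the index bookkeeping: one must check that ``customer $1$ opens a busy period, seen after reversal over customers $1,\dots,n$'' is \emph{exactly} the event $\{\sum_{i=1}^k a_i\ge\sum_{i=1}^{k+1}s_i\}$ over the relevant range of $k$, and — just as in the proof of Theorem \ref{burke1} — that after reversal the mark carried by the $n$-th point is the intended $r$-variable and not a neighbour. It is also worth noting at the outset that this conditioning event has positive probability under the stability hypothesis $\lambda<\mu$ (resp. $p<q$), being the event that a random walk with strictly positive drift remains in a half-line, so the conditional law is well defined. I expect the cleanest way to set all this up is a last-exit (renewal) decomposition of the reversed workload process at the idle period preceding the origin, which is exactly where the ``empty queue'' boundary data is converted into the non-colliding constraint.
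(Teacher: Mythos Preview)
Your high-level route (joint Burke $+$ backward Lindley $+$ an independence decoupling) is indeed the paper's, but the concrete mechanism you propose is off at two linked places.

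\textbf{Wrong conditioning event.} You condition on ``customer $1$ opens a busy period'', i.e.\ $w_1=0$. The paper instead works on $\{v_1=0\}$ with $v_n:=w_n+s_n=w_{n+1}+r_n$, and it is this quantity --- not $w_1$ --- that the backward Lindley recursion \eref{rec-lindley} develops into
\[
v_1=\sup_{k\ge 1}\Bigl\{\sum_{i=1}^{k}r_i-\sum_{i=1}^{k-1}d_i\Bigr\},
\]
so that $\{v_1=0\}$ is \emph{exactly} the non-colliding event written in the $(d,r)$ variables. The event $\{w_1=0\}$ cannot play that role: expanding $w_1$ backwards already drags in $d_0$, which lies outside $(d_i,r_i)_{i\ge 1}$. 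So your identification ``origin in an idle period $\leftrightarrow$ non-colliding event'' fails at the level of indices, not merely bookkeeping.

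\textbf{The busy-period identities are a red herring.} It is true that inside a busy period $d_m=s_{m+1}$ and $r_m=a_m$, but feeding these in gives $\sum_{i=1}^n d_i=\sum_{i=2}^{n+1}s_i$ and $\sum_{i=1}^n r_i=\sum_{i=1}^n a_i$: single terms in the max/min, not the extrema. The paper obtains the max/min \emph{without} assuming the busy period persists to customer $n$: on $\{v_1=0\}$ one telescopes \eref{eq-etoile} and \eref{back2} to get
\[
\sum_{i=1}^{n}d_i=\sum_{i=1}^{n}a_i+w_{n+1}+s_{n+1},\qquad
\sum_{i=1}^{n}r_i=\sum_{i=2}^{n}s_i-w_{n+1},
\]
and then substitutes the forward Lindley expression $w_{n+1}=\max_{1\le j\le n}\{\sum_{i=j+1}^n(s_i-a_i)\}$. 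The extremum over $j$ is precisely the trace of the busy period possibly ending and restarting before $n$; collapsing to a single busy period destroys it.

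Finally, the decoupling step you allude to is made precise in the paper as follows: on $\{v_1=0\}$ the max/min pair is a functional of $(a_i)_{i\ge 1}$ and $(s_i)_{i\ge 2}$ only, and these are independent of $v_1$ (which is determined by $(a_i,s_i)_{i\le 0}$ and $s_1$). That is what removes the conditioning --- no last-exit decomposition is needed.
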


\begin{proof}
Set $v_n=w_n+s_n=w_{n+1}+r_n$ (the {\em sojourn time} of customer
$n$ in the queue). By developing \eref{rec-lindley}, we obtain
 \begin{equation}\label{back-lind}
v_n=w_{n+1}+r_{n}=\sup_{k\geq n}\{\sum_{i=n}^{k}r_i -
\sum_{i=n}^{k-1}d_i\}\:.
 \end{equation}

By summing the equalities \eref{eq-etoile}, we obtain on the event
$\{v_1=0\}$:
\begin{eqnarray*}
 \sum_{i=1}^{n}d_i&=& \sum_{i=1}^{n}a_i+w_{n+1}+s_{n+1} \\
 &=&\sum_{i=1}^{n}a_i+\max_{1\leq j\leq n}\{\sum_{i=j+1}^{n+1}s_i -
\sum_{i=j+1}^{n}a_i\} \
 = \ \max_{1\leq j\leq n}\{\sum_{i=1}^{j}a_i +
\sum_{i=j+1}^{n+1}s_i\}\:,
\end{eqnarray*}
and summing \eref{back2} on the event $\{v_1=0\}$, we get:
\begin{eqnarray*}
 \sum_{i=1}^{n}r_i&=& \sum_{i=2}^{n}s_i-w_{n+1}\\
 &=&\sum_{i=2}^{n}s_i-\max_{1\leq j\leq n}\{\sum_{i=j+1}^{n}s_i -
\sum_{i=j+1}^{n}a_i\} \
 = \ \min_{1\leq j\leq n}\{\sum_{i=2}^{j}s_i +
\sum_{i=j+1}^{n}a_i\}\:.
\end{eqnarray*}
To summarize, we have
\begin{equation}\label{Representation}
 \sum_{i=1}^{n}d_i=\max_{1\leq j \leq
n}\{\sum_{i=1}^{j}a_i+\sum_{i=j+1}^{n+1}s_i\},\:\:\:
 \sum_{i=1}^{n}r_i=\min_{1\leq j \leq
 n}\{\sum_{i=2}^{j}s_i+\sum_{i=j+1}^{n}a_i\}\:.
\end{equation}

Applying Theorem \ref{burke1} or \ref{burke-geom}, the
distribution of $(\sum_{i=1}^{n}a_i,\:\sum_{i=1}^{n}s_i)$ given
 that  $\{\sum_{i=1}^{k}a_i \geq \:\sum_{i=1}^{k+1}s_i,\ \forall k \geq
0\}$ is the same as the law of
 $(\sum_{i=1}^{n}d_i,\:\sum_{i=1}^{n}r_i)$ given
 that $\{\sum_{i=1}^{k}d_i \geq
\:\sum_{i=1}^{k+1}r_i,\ \forall k \geq
 0\}$.
 By  (\ref{back-lind}), the event $\{\sum_{i=1}^{k}d_i \geq
 \:\sum_{i=1}^{k+1}r_i,\ \forall k \geq
 0\}$ is equal to $\{v_1=0\}$.

 Using \eref{Representation}, we have that the distribution of $(\sum_{i=1}^{n}d_i,\:\sum_{i=1}^{n}r_i)$ given
 that  $\{v_1=0\}$
is the same as the law of $(\max_{1\leq j \leq
n}\{\sum_{i=1}^{j}a_i+\sum_{i=j+1}^{n+1}s_i\},\min_{1\leq j \leq
 n}\{\sum_{i=2}^{j}s_i+\sum_{i=j+1}^{n}a_i\})$ given
 that $\{v_1=0\}$. This last conditional law is the same as the
 unconditional law of  $(\max_{1\leq j \leq
n}\{\sum_{i=1}^{j}a_i+\sum_{i=j+1}^{n+1}s_i\},\min_{1\leq j \leq
 n}\{\sum_{i=2}^{j}s_i+\sum_{i=j+1}^{n}a_i\})$. Indeed,
the variables $(a_i)_{i\geq 1}$ and $(s_i)_{i\geq 2}$ are
independent of $\{v_1=0\}$.

By fitting all the parts together, we obtain the desired result.
\end{proof}

It is possible to extend Proposition \ref{pr-pitman} to the limiting case
$E[a_1]=E[s_1]$, and also to higher dimensions, by adapting
the methods of \cite{OY02,KOR,HMOc} to the present setting.

\section{Transient Behavior and RSK Representation}\label{se-trans}

In Sections \ref{sse-satu} and \ref{sse-rsk}, the model described and
the results obtained have a purely combinatorial nature. In
particular, the results are true pathwise, without any 
probabilistic assumption. 
These results are then used in Section \ref{sse-inter}
for geometric queues or stores: here the
pathwise arguments are coupled with probabilistic ones. 

\subsection{The saturated tandem}\label{sse-satu}

We consider another aspect of the dynamic of queues and stores: the
transient evolution for the model starting empty. More precisely,
consider the model of \S \ref{se-model} under the assumption that
$w_0=A_0=s_0=0$ (which implies $D_0=r_0=0$) and focus on the customers,
resp. time slots, from 1 onwards.

It is convenient to describe such a model with a different
perspective. We first do it for the queue. View the arrivals as being
the departures from a {\em virtual queue} having at instant 0 an infinite
number of customers (labelled by $\N^*$) in its buffer. The service
time of customer $n$ in the virtual queue is $a_{n-1}$. We describe
this as a {\em saturated tandem of two queues}.

Let us turn our attention to the store. View the supplies as being the
departures from a virtual store having an infinite stock at the end of
time slot 0. In the virtual store, the request (=departure) at time slot
$n$ is
$s_{n}$. This
is a {\em saturated tandem of two stores}.

Now we want to fit these two descriptions together.
Denote the virtual queue/store as {\em queue/store 1} and the other one as
{\em
queue/store 2}.
For convenience, set $u(n,1)=a_{n-1}$ and $u(n,2)=s_{n}$ for all $n\geq
 1$.
The saturated tandem is completely specified
by the family
$(u(n,i), n \in \N^*,i=1,2)$ of input variables.
These variables are the services, resp. requests, when
the model is seen as a tandem of queues, resp. stores.
Next table gives the input variables in the saturated tandems of two
queues/stores.

\medskip

\begin{center}

\begin{tabular}{|c||c|c|c|c|}\hline
Customer / Time slot & 1 & 2 & $\cdots$ & $n$ \\
\hline \hline Queue 2 / (Virtual) Store 1& $u(1,2)=s_1$ &
$u(2,2)=s_2$ & & $u(n,2)=s_n$ \\ \hline (Virtual) Queue 1 / Store
2  & $u(1,1)=a_0$ & $u(2,1)=a_1$ & &$u(n,1)=a_{n-1}$
\\ \hline
\end{tabular}
\end{center}

\medskip

A couple of observations are in order.
Observe that $u(n,i)$ is the service of customer $n$  in queue $i$, and
the request at time slot $n$ in store $3-i$. In other words,
the elements (queues or stores) associated with a given sequence
$(u(\cdot,i)), i=1,2,$ are crossed in reverse orders in the
queueing/storage tandem. This is illustrated in Figure \ref{fi-tandem} (set
$K=2$).
Observe also that there is a shift in the time slots for the storage
model: the departure from store 1 at time slot $n$ is the supply of
store 2 at time slot $n+1$ (contrast this with the situation for the
queues).  This is consistent with
a model in which we view a time slot as being decomposable in three
consecutive stages: first, the supply arrives; second, the request is
made; third, the departure occurs. See Figure \ref{fi-store}.

\begin{figure}[hbt]
\[\epsfxsize=280pt \epsfbox{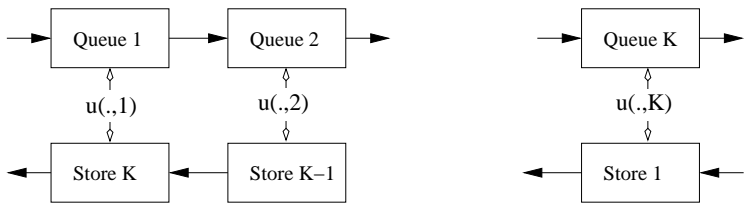} \]
\caption{Queues and stores are gone through in opposite orders}
\label{fi-tandem}
\end{figure}

The above setting is naturally extended to define the {\em saturated
  tandem of $K$ queues/stores}. Such a model is entirely defined by a
  family of $\N$-valued variables $(u(i,j))_{i\in \N^*, j\in
  \{1,\dots, K\}}$. (In \S \ref{se-model}, the variables were 
  valued in $\N^*$. This restriction is not necessary here.) 
For the queueing model: (i) at instant 0, queue 1 has an infinite number of
  customers labelled by $\N^*$ in its buffer, and the other queues are
  empty; (ii)
  $u(i,j)$ is the service of customer $i$ at
  queue $j$; (iii) the instant of departure of customer $n$ from queue
$i$  is the instant of arrival of customer $n$ in
queue $i+1$.
For the storage model: (i) at the end of time slot 0, store 1 has an
  infinite stock and the other stores have an empty stock; (ii)
  $u(i,K+1-j)$ is the request at time slot $i$ in store $j$; (iii) the
  departure at time slot $n$ from store
$i$ is the supply at time slot $n+1$ in store $i+1$. The models
are depicted in Figure \ref{fi-tandem}.

\subsection{Interacting particles representation}
\label{sse-part}

It is helpful to describe the above tandem models as interacting
particle models (see for instance \cite{ligg99}). 
Commonly, interacting particle systems are defined as
continuous-time Markov processes. 
In contrast, the
description below is pathwise and the time is discrete. 

\medskip

Classically, the queues in tandem may be viewed as the following {\em
  zero-range process}. The set of sites is $\{1,\dots , K\}$ and a
  site is occupied by a number of particles in $\N \cup
  \{+\infty\}$. At each site, 
  the front particle (if any) has a clock whose value is decremented
  by 1 at each time slot. When it reaches 0, the particle jumps to the site
  on its right. The right-most site has an infinite number of
  particles. This is illustrated in Figure \ref{fi-tandq}. The
  variables $(u(i,j))_{i}$ correspond to the  
initial values of the clocks of the successive front particles at site
  $j$. 

\begin{figure}[hbt]
\[\epsfxsize=260pt \epsfbox{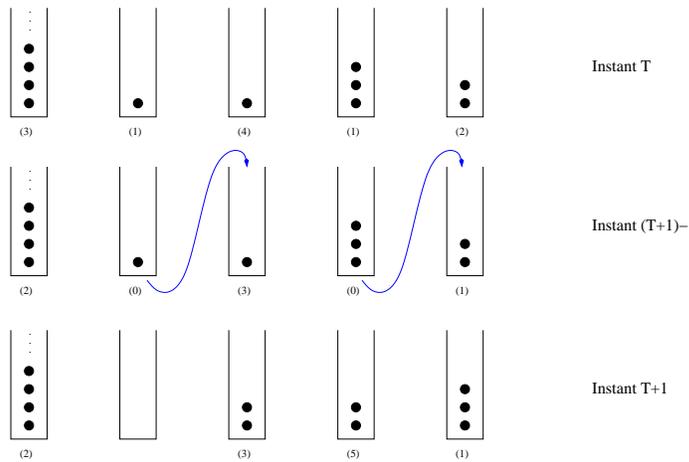} \]
\caption{The zero-range process for queues in tandem}
\label{fi-tandq}
\end{figure}

\medskip

We now describe the stores in tandem as an original
zero-range model: the {\em bus-stop model}. 
At the beginning of each time slot, a bus arrives at each site. The
size of the bus varies from site to site, and from time slot to time
slot. 
The particles take place in the bus until it is filled in, and are
transported to the next site on the left, during the current time
slot. The left-most site has an infinite number of particles. 
This mechanism is illustrated in Figure \ref{fi-tands}. Here the
variables $(u(i,j))_{i}$ correspond to the successive bus sizes at
site $j$. 

\begin{figure}[hbt]
\[\epsfxsize=280pt \epsfbox{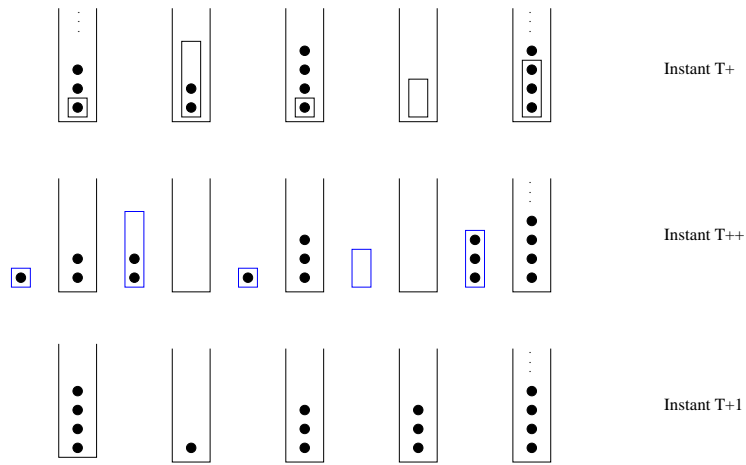} \]
\caption{The zero-range process for stores in tandem}
\label{fi-tands}
\end{figure}

\medskip

Alternatively, the above two models can also be viewed as {\em exclusion
processes}. The set of sites is embedded in $\N$ and each site
now contains either 0 or 1 particle. 

The queues in tandem are associated with the classical TASEP or
totally asymmetric exclusion process. 
For the stores in tandem, the exclusion process is as
follows. Particles take jumps of integer length to the right which are
constrained by the interdiction to overpass other particles. In one
time step, particles jump in order from left to right. 
This model is close but different from the exclusion processes
considered in \cite{RSSS,sepp00}. 

\begin{figure}[hbt]
\[\epsfxsize=300pt \epsfbox{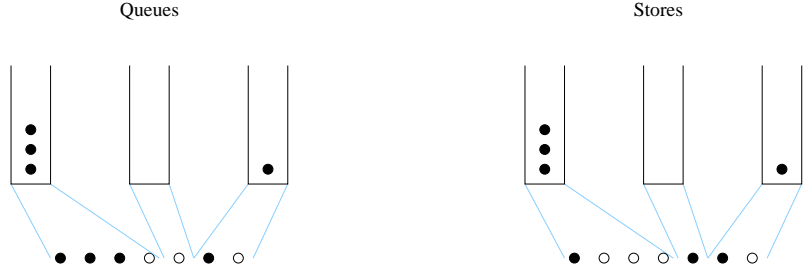} \]
\caption{Exclusion processes for the queues and the stores}
\label{fi-exclu}
\end{figure}

The mapping between zero-range process and exclusion process, both for
the queues and the stores, is illustrated in Figure \ref{fi-exclu}.

\subsection{Robinson-Schensted-Knuth representation}\label{sse-rsk}

A {\em partition} of $n\in \N^*$ is a sequence of integers
$\lambda=(\lambda_1,\dots,\lambda_k)$ such that $\lambda_1\geq
\cdots\geq\lambda_k\geq 0$ and $\lambda_1 +
\cdots +\lambda_k =n$. We use the notation $\lambda \vdash n$. By
convention, we identify partitions having the same non-zero
components.
The {\em (Ferrers) diagram} of
$(\lambda_1,\dots,\lambda_k)\vdash n$ is a collection of $n$ boxes
arranged in left-justified rows, the $i$-th row starting from the
top consisting of $\lambda_i$ boxes.
A {\em (semi-standard Young) tableau} on the alphabet $\{1,\dots,\ell\}$ is
a diagram in which each box is filled
in by a label from $\{1,\dots,\ell\}$ in such a way that the entries are
weakly increasing from left to right along the rows and strictly
increasing down the columns. The {\em shape} of a diagram or tableau
is the underlying partition. A {\em standard tableau} of size
$n$ is a tableau of shape $\lambda\vdash n$ whose entries are from $\{1,\dots , n\}$ and are
distinct.
In Figure \ref{fi-Young} we have represented a diagram on the left and
a tableau of the same shape on the right.

\begin{figure}[hbt]
\[\epsfxsize=170pt \epsfbox{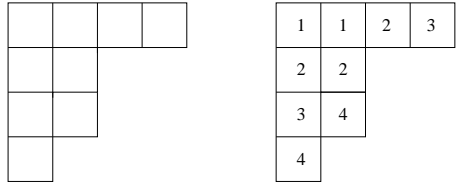} \]
\caption{Ferrers diagram and semi-standard Young tableau of shape
  $(4,2,2,1)\vdash 9$}
\label{fi-Young}
\end{figure}

The Robinson-Schensted-Knuth row-insertion algorithm (RSK algorithm) takes a tableau $T$ and $i\in \N^*$ and
constructs a new tableau $T\leftarrow i$. The tableau $T\leftarrow i$ has one more box than $T$
and is constructed as follows.
If $i$ is at least as large as the labels of the first (upper) row of $T$, add a box labelled $i$ to the end
of the first row of $T$ and stop the procedure. Otherwise, find the
leftmost entry in the first row which is strictly larger
than $i$, relabel the corresponding box by $i$ and apply the same procedure
 recursively
to the second row and to the bumped label. By convention, an empty row
has label $0$. With this convention, the above procedure stops.

Consider a word $v=v_1\cdots v_n$ over the alphabet
$\{1,\dots , k\}$. The tableau associated with $v$ is by definition
\[
P=(\cdots ((T_0\leftarrow  v_1)\leftarrow  v_2 )\cdots \leftarrow
v_n)\:,
\]
where $T_0$ is the empty tableau. Observe that $P$ has at most $k$
non-empty rows. Classically, the length of the top (and longest) row of $P$
 is
equal to the longest weakly-increasing subsequence in $v$.

\begin{remark}\rm While building the tableau $P$, it is possible to build another
tableau of the same shape in which the entries, labelled from 1 to
$|v|$, record the order in which the boxes are added. This {\em
  recording tableau} is a standard tableau of size
$|v|$. By doing this, one  defines a bijection between words
of $\{1,\dots ,k\}^n$ and ordered pairs of tableaux of the same shape, the
first being semi-standard over the alphabet $\{1,\dots,k\}$ and the
second being standard and of size $n$. This bijection is often
referred to as the Robinson-Schensted-Knuth correspondence. 
Here, we do not need this result and we do not consider the recording
tableau.
\end{remark}

Consider a family
$U=(u(i,j))_{(i,j)\in\{1,\dots,N\}\times\{1,\dots,K\}}$ of variables valued 
in $\N$.
(Here we do not make any
assumption on the variables, which may or may not be random.)
We associate
with $U$ the word $w(U)$ over the alphabet $\{1,\dots , K\}$ defined by
\begin{equation}\label{eq-wu}
\begin {array}{ccccc}
\noalign{\medskip}
w(U)=w_1\dots w_N, \mrm{ with } w_i=&\underbrace{1 \cdots 1}
&\underbrace{2\cdots 2}&\cdots&\underbrace{K \cdots K}\\
    &      u(i,1)&                 u(i,2)&          \cdots      &u(i,K)
\end {array}\:.
\end{equation}

Set $M=|w(U)|=\sum_{i=1}^N\sum_{j=1}^K u(i,j)$.
For $i=1,\dots,K$, define
\[
\forall n \leq M, \ \ x_i(n)=|\{j\leq n\:|\:w(U)_j=i\}|\:.
\]
Given two maps $x,y: \{1,\dots, N\} \longrightarrow \N$, define the
maps $x\triangledown y, x\triangleup y :
\{1,\dots, N\} \longrightarrow \N$ as follows: $\forall n\leq N$,
\begin{equation}\label{min-max}
\ x\triangledown y(n)=\max_{0\leq m
  \leq n}\Bigl[x(m)+y(n)-y(m)\Bigr], \ \
x\triangleup y(n)=\min_{0\leq m \leq n}\Bigl[x(m)+y(n)-y(m)\Bigr]\:.
\end{equation}

Denote by $P(U)$ the tableau
obtained from $w(U)$ by applying the RSK algorithm and let
$(\lambda_1,\dots , \lambda_K)$ be its shape. The following
holds
\begin{equation}\label{lambda}
\lambda_1 = x_1\triangledown x_2\triangledown\dots\triangledown
x_K(M), \ \ \lambda_K = x_K\triangleup\dots\triangleup x_{2}\triangleup
x_1(M)\:,
\end{equation}
where
the operations $\triangledown,\:\triangleup$  are performed from left
to right (the operations $\triangledown,\:\triangleup$ are
non-associative).
The expression for $\lambda_1$ follows from the fact that $\lambda_1$
is the longest weakly-increasing subsequence in $w(U)$. The expression for
$\lambda_K$ is proved in \cite[Theorem 3.1]{ocon03}.
In fact, the result from \cite{ocon03} is more general:
there exists a min-max-type operator $\Gamma_K$ such that
$(\lambda_1, \dots,\lambda_K)=\Gamma_K(x_1,\dots, x_K)$.

\medskip

A {\em lattice path} is a sequence $\pi=((i_1,j_1),\dots ,(i_l,j_l))$
with $(i_k,j_k)\in \Z^2$. The {\em steps} of $\pi$ are the differences
$(i_{k+1}-i_k,j_{k+1}-j_k),\:k=1,\dots l-1$.

\begin{figure}[hbt]
\[\epsfxsize=260pt \epsfbox{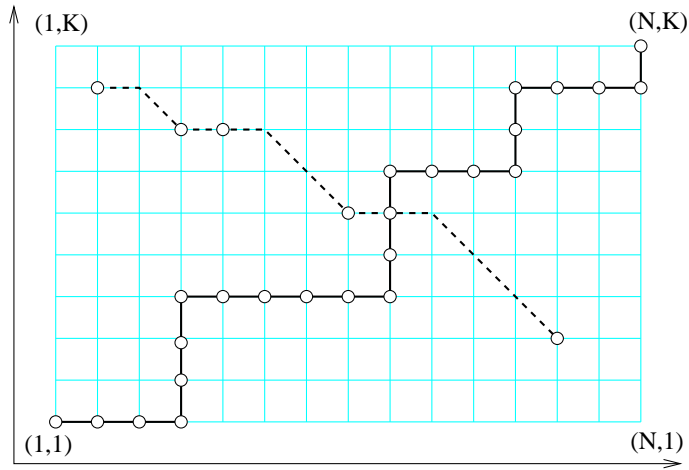} \]
\caption{A path from $\Pi$ (plain line) and a path from
  $\widetilde{\Pi}$ (dashed line)}
\label{fi-paths}
\end{figure}

Let $\Pi$ be the set of lattice paths from (1,1) to $(N,K)$ with
steps of the type $(1,0)$ or $(0,1)$. All the paths in $\Pi$ have
the same number of nodes: $N+K-1$. Let $\widetilde{\Pi}$ be the
set of lattice paths from one of the points in $\{(1+i,K-i),
i=0,\dots, K-1\}$ to one of the points in $\{(N-j,1+j), j=0,\dots
, N-1\}$ with steps of the type $(1+i,-i),\:i \geq 0$.

 Observe
that all the paths in $\widetilde{\Pi}$ also have the same number
of nodes: $(N-K+1)^+$. In particular, $\widetilde{\Pi}$ is empty
when $N<K$. An example of a path from each of the two sets is
provided in Figure \ref{fi-paths}.

The following result holds. When interpreting it, recall that
queues and stores are arranged in opposite orders in the tandems,
see Figure \ref{fi-tandem}. It is also fruitful to compare the
statements of Proposition \ref{pr-pitman} and Theorem
\ref{row-queue} for $K=2$.

\begin{theorem}\label{row-queue}
Consider a saturated tandem of $K$ queues/stores with variables
$u(i,j),i\in \N^*,j\in \{1,\dots, K\},$ as defined in \S
\ref{sse-satu}. Fix $N\in \N^*$. Let $D$ be the instant of
departure of customer $N$ from queue $K$ and let $R$ be the
cumulative departures over the time slots 1 to $N$ in store $K$.
Set $U=(u(i,j))_{1\leq i \leq N\:,\:1\leq j \leq K}$ and define
$w(U)$ as in \eref{eq-wu}. Let $(\lambda_1,\dots , \lambda_K)$ be
the shape of the tableau associated with $w(U)$. We have
\begin{eqnarray}
\lambda_1 &=& \max_{\pi\in\Pi}\Bigl[\sum_{(i,j)\in\pi}u(i,j)\Bigr] \ =
\ D \label{main1}\\
\lambda_K &=&
\min_{\pi\in\widetilde{\Pi}}\Bigl[\sum_{(i,j)\in\pi}u(i,j)\Bigr]\ = \
R \:. \label{main2}
\end{eqnarray}
\end{theorem}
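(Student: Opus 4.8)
The statement splits into two essentially independent pieces, \eref{main1} and \eref{main2}, each of which is a chain of two identities. The plan is to establish the four equalities separately, using different tools for the combinatorial (tableau) identities and the queueing/storage identities.

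For \eref{main1}, the first equality $\lambda_1 = \max_{\pi\in\Pi}\sum_{(i,j)\in\pi}u(i,j)$ is the classical interpretation of the top row of an RSK tableau: by \eref{lambda}, $\lambda_1 = x_1\triangledown x_2\triangledown\cdots\triangledown x_K(M)$, and unwinding the definition \eref{min-max} of $\triangledown$ together with the definition of the $x_i$ and of $w(U)$ in \eref{eq-wu} rewrites this max-of-sums as a maximum over monotone lattice paths in $\Pi$; this is standard and I would just cite it. The second equality, $\max_{\pi\in\Pi}\sum_{(i,j)\in\pi}u(i,j)=D$, I would prove by induction on $N+K$, or more transparently by recognizing that the departure process in a saturated tandem of queues satisfies precisely the recursion \eref{eq-recd} layer by layer: if $D^{(j)}_n$ is the departure instant of customer $n$ from queue $j$, then $D^{(j)}_n = \max(D^{(j)}_{n-1},D^{(j-1)}_n)+u(n,j)$ with $D^{(0)}_n=0$ (queue $1$ is saturated) and $D^{(j)}_0=0$. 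Iterating this recursion gives exactly the last-passage formula $D^{(K)}_N=\max_{\pi\in\Pi}\sum_{(i,j)\in\pi}u(i,j)$; this is the folklore Glynn–Whitt type identity referenced in the introduction, so again a brief argument or citation suffices.

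For \eref{main2}, the first equality $\lambda_K = \min_{\pi\in\widetilde\Pi}\sum_{(i,j)\in\pi}u(i,j)$ is the content of \cite[Theorem 3.1]{ocon03} once one checks that the min-max expression $x_K\triangleup\cdots\triangleup x_1(M)$ from \eref{lambda}, expanded via \eref{min-max}, matches a minimum over the path family $\widetilde\Pi$; here one must be careful that the steps $(1+i,-i)$ and the staggered endpoint sets in the definition of $\widetilde\Pi$ are exactly what the nested $\triangleup$ operations produce — this bookkeeping (tracking how each application of $\triangleup$ contributes one "down-right" step of variable length) is the one genuinely fiddly point, but it is a finite verification. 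The second equality, $\min_{\pi\in\widetilde\Pi}\sum_{(i,j)\in\pi}u(i,j)=R$, is the dual statement and is the real novelty. The plan is to mirror the queueing argument: let $R^{(j)}_n$ denote the cumulative departures from store $j$ over time slots $1$ to $n$, with store $1$ saturated (so $R^{(1)}_n=\infty$ effectively, i.e. every demand is met) and $R^{(j)}_0=0$. Using the storage recursion — the stock level $w$ evolves by \eref{Reclindley} and the per-slot departure is $r_n = s_n + w_n - w_{n+1}$ by \eref{back2}, so cumulative departures are $\sum_{i\le n} r_i = \sum_{i\le n} s_i - w_{n+1}$ where the $w$'s are themselves sup's over the history — one obtains for $R^{(K)}_N$ a nested $\min$/difference expression in the partial sums of the $u(i,j)$. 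Matching that expression to the path-sum minimum over $\widetilde\Pi$ (again keeping in mind the opposite ordering of stores vs. queues and the one-slot shift noted in \S\ref{sse-satu}) finishes the proof; I would organize this as an induction on $K$ layered on top of the single-store identity $R_n=\sum_{i\le n}r_i=\min_{1\le j\le n}\{\sum_{i=2}^j s_i + \sum_{i=j+1}^n a_i\}$ already derived in \eref{Representation} (on the event $v_1=0$, which holds here because the tandem starts empty).

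The main obstacle I anticipate is not any single deep step but the combinatorial matching in \eref{main2}: verifying that the iterated non-associative $\triangleup$ operator produces exactly the path set $\widetilde\Pi$ with its variable-length "knight-like" steps $(1+i,-i)$ and its two staggered families of endpoints, and simultaneously that the layered storage recursion unwinds to the same nested minimum — getting the index shifts (the $s_n,a_n,r_n$-for-slot-$n+1$ convention, the $K+1-j$ relabelling) to line up requires care. Once the path-sum formula for $R$ is correctly stated, the equality with $\lambda_K$ reduces to the already-cited result of \cite{ocon03}, so the crux is really producing that formula and confirming the two combinatorial descriptions coincide.
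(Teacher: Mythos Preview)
Your plan is correct and tracks the paper's proof closely: both treat \eref{main1} by citation (Schensted for the left equality, the Glynn--Whitt/Muth/Tembe--Wolff last-passage formula for the right), and both handle the left equality in \eref{main2} by unwinding $x_K\triangleup\cdots\triangleup x_1(M)$ into the path minimum over $\widetilde\Pi$ (the paper isolates this as a separate Lemma~\ref{le-lambbdaPath}, with exactly the ``fiddly bookkeeping'' you anticipate).

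The one substantive organizational difference is in the right equality of \eref{main2}. You propose induction on $K$, feeding the single-store identity \eref{Representation} through the layers. The paper instead uses a direct \emph{diagonal telescoping}: from \eref{back2} one gets $r(n,K)=r(n-1,K-1)+w(n,K)-w(n+1,K)$, and iterating across stores yields $r(n,K)=u(n-K+1,K)+\sum_{i=0}^{K-1}[w(n-i,K-i)-w(n-i+1,K-i)]$; summing over $n$ collapses to $R=\sum_{n=1}^{N-K+1}u(n,K)-\sum_{i=1}^{K}w(N-K+i+1,i)$, and the remaining sum of workloads is then expanded recursively into a single max, giving the min-over-$\widetilde\Pi$ formula. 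Your inductive route is valid but requires tracking partial sums $\sum_{i=m}^n r^{(k-1)}_i$ over variable ranges (not just the cumulative $R^{(k-1)}_n$), which is precisely the bookkeeping the paper's diagonal telescoping sidesteps; the paper's approach is slightly slicker for this reason, but the two are equivalent in substance.
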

First, we state and prove a preliminary result.
\begin{lemma}\label{le-lambbdaPath}
We have
\begin{equation}
x_K\triangleup\dots\triangleup x_{2}\triangleup
x_1(M)=\min_{\pi\in\widetilde{\Pi}}\Bigl[\sum_{(i,j)\in\pi}u(i,j)\Bigr]\:.
\end{equation}
\end{lemma}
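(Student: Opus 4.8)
The plan is to unfold the iterated operator $\triangleup$, re-read it combinatorially on the word $w(U)$, and then match the resulting minimisation with the paths of $\widetilde{\Pi}$; one may and does assume $N\ge K$, as $\widetilde{\Pi}$ is otherwise empty. Since $\min$ distributes over $\min$, iterating $(f\triangleup g)(n)=\min_{0\le m\le n}[f(m)+g(n)-g(m)]$ from left to right yields, for $K\ge2$,
\begin{equation*}
x_K\triangleup\dots\triangleup x_1(M)=\min_{M\ge t_1\ge\dots\ge t_{K-1}\ge0}\Bigl[x_K(t_{K-1})+\sum_{l=2}^{K-1}\bigl(x_l(t_{l-1})-x_l(t_l)\bigr)+x_1(M)-x_1(t_1)\Bigr],
\end{equation*}
and a telescoping rearrangement rewrites the bracket as $x_1(M)+\sum_{l=1}^{K-1}\psi_l(t_l)$ with $\psi_l:=x_{l+1}-x_l$ (the case $K=1$ being immediate). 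So the whole problem is to compute $\min\{x_1(M)+\sum_{l=1}^{K-1}\psi_l(t_l)\}$ over decreasing chains $M\ge t_1\ge\dots\ge t_{K-1}\ge0$.

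Next I would use the explicit shape of $w(U)$: block $i$ is the run word $1^{u(i,1)}\cdots K^{u(i,K)}$, so writing $X_l(b):=\sum_{i=1}^b u(i,l)$, a cut $t$ lying inside block $b$ just after the run of $l$'s and before the run of $(l+1)$'s satisfies $x_l(t)=X_l(b)$ and $x_{l+1}(t)=X_{l+1}(b-1)$; moreover within a single block $\psi_l$ first decreases (along the run of $l$'s) and then increases (along the run of $(l+1)$'s), hence is minimised on that block precisely at such a trough, with value $X_{l+1}(b-1)-X_l(b)$. I would then show, by an exchange argument, that the minimum of $x_1(M)+\sum_l\psi_l(t_l)$ is attained at a \emph{trough configuration}: one in which $t_l$ is the trough of $\psi_l$ inside some block $b_l$; and since the run of $l$'s precedes the run of $(l+1)$'s in every block, the chain $t_1\ge\dots\ge t_{K-1}$ then forces the \emph{strict} inequalities $N\ge b_1>b_2>\dots>b_{K-1}\ge1$, while conversely each such chain is admissible. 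Consequently
\begin{equation*}
x_K\triangleup\dots\triangleup x_1(M)=\min_{N\ge b_1>\dots>b_{K-1}\ge1}\Bigl[X_1(N)+\sum_{l=1}^{K-1}\bigl(X_{l+1}(b_l-1)-X_l(b_l)\bigr)\Bigr].
\end{equation*}

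It remains to recognise the right-hand side as a minimum of $\sum_{(i,j)\in\pi}u(i,j)$ over $\widetilde{\Pi}$. Setting $b_0=N+1$, $b_K=0$, I would associate to a chain $(b_1>\dots>b_{K-1})$ the lattice path that, for $c=K,K-1,\dots,1$, occupies in column $c$ exactly the rows $b_c+1,\dots,b_{c-1}-1$ — an empty stretch when $b_{c-1}=b_c+1$, in which case the path simply jumps over column $c$. Reading off the endpoints gives starting point $(K-c_1+1,c_1)$ and end point $(N-c_p+1,c_p)$, where $c_1>\dots>c_p$ are the non-skipped columns, and between two consecutive non-skipped columns the jump is exactly of type $(1+i,-i)$; thus the path belongs to $\widetilde{\Pi}$, and a direct check shows that the correspondence is a bijection onto $\widetilde{\Pi}$. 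Finally
\begin{equation*}
\sum_{(i,j)\in\pi}u(i,j)=\sum_{c=1}^{K}\bigl(X_c(b_{c-1}-1)-X_c(b_c)\bigr)=X_1(N)+\sum_{l=1}^{K-1}\bigl(X_{l+1}(b_l-1)-X_l(b_l)\bigr),
\end{equation*}
using $X_K(b_K)=0$ and a shift of the summation index, so the two minima coincide and the lemma follows.

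The one genuinely delicate step is the exchange argument in the second paragraph, i.e.\ showing that enlarging the search from trough configurations to arbitrary decreasing chains does not lower the value: one must slide each $t_l$ onto a trough of $\psi_l$ without breaking the monotonicity of the chain, and the care is needed precisely when two consecutive $t_l$'s fall in the same block — the situation producing a skipped column in $\widetilde{\Pi}$ — and near the two ends of $w(U)$; the rest is bookkeeping.
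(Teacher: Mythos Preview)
Your approach is essentially the same as the paper's: both unfold the iterated $\triangleup$, then argue that the minimising indices can be slid to ``trough'' positions (the paper's set $I$ is exactly the set of troughs of your $\psi_{K-1}$, and it proceeds level by level just as you do), and finally read off the resulting strict chain $b_1>\dots>b_{K-1}$ as a path in $\widetilde{\Pi}$. Your telescoping rewrite $x_1(M)+\sum_l\psi_l(t_l)$ is a clean touch that makes the sliding argument more transparent than in the paper, but the delicate step you flag---handling adjacent $t_l$'s in the same block and the boundary cases---is precisely the point the paper also treats only sketchily, so your proof is at the same level of completeness.
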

\begin{proof}
By definition,
$$
x_K\triangleup\dots \triangleup
x_1(M)=\min_{1\leq m_K<\cdots<m_1=
M} x_K(m_K)+x_{K-1}(m_{K-1})-x_{K-1}(m_K)+\cdots+x_{1}(m_{1})-x_{1}(m_2) \:.
$$
Let $(m_K^*,\dots ,m_2^*)$ be a sequence of integers realizing the
minimum in the above equation.
Consider the set of integers
\[
I=\{ \sum_{i=1}^k |w_i| +
\sum_{j=1}^{K-1} u(k+1,j), \ k\in \{0,\dots, K-1\}\:.
\]
We may assume wlog that $m_K^*\in I$. Indeed assume that $m_K^*\not\in
I$, and let $n_K$ be the smallest integer such that $n_K \in I, n_K>
m_K^*$. Then, we clearly have $x_K(n_K)=x_K(m_K^*)$ and
\begin{eqnarray*}
 x_K\triangleup\dots \triangleup
x_1(M) - x_K(n_K) & \leq & \min_{n_K < m_{K-1} < \cdots < m_2}
x_{K-1}(m_{K-1})-x_{K-1}(n_K) + \cdots + x_1(M)-x_1(m_2) \\
&\leq & \min_{m_K^* < m_{K-1} < \cdots< m_2}
x_{K-1}(m_{K-1})-x_{K-1}(m_K^*) + \cdots + x_1(M)-x_1(m_2) \\
& = &  x_K\triangleup\dots \triangleup x_1(M) -x_K(m_K^*)\:.
\end{eqnarray*}
So we assume that  $m_K^*\in I$. Let $i_K$ be such that $m_K^*=
\sum_{i=1}^{i_K-1} |w_i| +
\sum_{j=1}^{K-1} u(i_K,j)$. We have  $x_K(m_K^*)
=\sum_{j=1}^{i_K-1}u(j,K)$. Then we prove with the same type of argument that
we can assume wlog that $m_{K-1}^*\in \bigl\{ \sum_{i=1}^k |w_i| +
\sum_{j=1}^{K-2} u(k+1,j), \ k\in \{i_K,\dots, K-1\}\bigr\}$.

\begin{figure}[hbt]
\[\epsfxsize=300pt \epsfbox{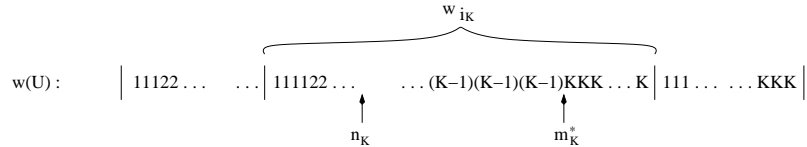} \]
\caption{Illustration of the proof of Lemma \ref{le-lambbdaPath}} \label{fi-proof}
\end{figure}

Since $m_K^*\in I$, we have $x_K(m_K^*) =\sum_{j=1}^{i_K}u(K,j)$ for
some $i_K$. Then we prove with the same type of argument that
we can assume wlog that $m_{K-1}^*\in \{ \sum_{i=1}^k |w_i| +
\sum_{j=1}^{K-2} u(k,j), \ k\in \{i_K+1,\dots, K-1\}$.
It implies that $x_{K-1}(m_{K-1})-x_{K-1}(m_K)=\sum_{j=i_K+1}^{i_{K-1}-1}u(K-1,j)$ for
some $i_{K-1}$. By repeating the argument, we obtain the desired
result.
\end{proof}

\renewcommand\proofname{Proof  of Theorem \ref{row-queue}}
\begin{proof}
The result for $\lambda_1$ is essentially due to Schensted. The
left equality in (\ref{main1}) is (\ref{lambda}) and the right
equality appears for instance in \cite{muth,TeWo,GlWh}. As far as
$\lambda_K$ is concerned, Lemma \ref{le-lambbdaPath} leads to
\[\lambda_K=x_K\triangleup\dots\triangleup x_{2}\triangleup
x_1(M)=\min_{\pi\in\widetilde{\Pi}}\Bigl[\sum_{(i,j)\in\pi}u(i,j)\Bigr]\:.\]

It remains to prove the right equality in \eref{main2}. Let us
denote by $r(i-1,j)$ the amount departing at time slot $i$ from
store $j$ and $w(i,j)$ the quantity remaining at the end of time
slot $i$ in store $j$. According to \eref{back2}, we have
\begin{equation}\label{eq-r_tandem}
r(n,k)=r(n-1,k-1)+w(n,k)-w(n+1,k)\:,
\end{equation}

Applying \eref{Reclindley} to our tandem model,
\begin{eqnarray*}
w(n+1,k)&=& \bigl[ w(n,k)+r(n-1,k-1)-u(n,K+1-k)\bigr]^+\\
&=&\max_{1\leq m\leq
n}[\sum_{i=m}^{n-1}(r(i,k-1)-\sum_{i=m+1}^{n}u(i,K+1-k))]\:.
\end{eqnarray*}
Using \eref{eq-r_tandem}, for $n\geq K$, we get

\begin{eqnarray*}
r(n,K)&=&r(n-1,K-1)+w(n,K)-w(n+1,K)\\
&=&r(n-2,K-2)+w(n-1,K-1)-w(n,K-1)+w(n,K)-w(n+1,K)\\
&=&r(n-K+1,1)+\sum_{i=0}^{K-2}\bigl[ w(n-i,K-i)-w(n-i+1,K-i)\bigr] \\
&=&u(n-K+1,K)+\sum_{i=0}^{K-1} \bigl[ w(n-i,K-i)-w(n-i+1,K-i)\bigr] \:.
\end{eqnarray*}
Hence,
\begin{eqnarray*}
R=\sum_{n=1}^{N}r(n,K)&=&\sum_{n=K}^{N}r(n,K)\\
&=&\sum_{n=K}^{N}u(n-K+1,K)+\sum_{n=K}^{N}\sum_{i=0}^{K-1}w(n-i,K-i)-w(n-i+1,K-i)\\
&=&\sum_{n=1}^{N-K+1}u(n,K)+\sum_{i=0}^{K-1}\sum_{n=K}^{N}w(n-i,K-i
)-w(n-i+1,K-i)\\
&=&\sum_{n=1}^{N-K+1}u(n,K)+\sum_{i=0}^{K-1}w(K-i,K-i)-w(N-i+1,K-i)\:.
\end{eqnarray*}
Now recall that $w(k,k)=0$, for all $k\geq 1$. We obtain
\begin{eqnarray}
\nonumber
R&=&\sum_{n=1}^{N-K+1}u(n,K)+\sum_{i=1}^{K}w(i,i)-\sum_{i=1}^{K}w(N-
K+i+1,i)\\\label{eq-R}
&=&\sum_{n=1}^{N-K+1}u(n,K)-\sum_{i=1}^{K}w(N-K+i+1,i)\:.
\end{eqnarray}
Moreover, set
$A=w(N+1,K)+w(N,K-1)$, we have

\begin{eqnarray*} 
A&=&\max_{1\leq i_1\leq
N}\Bigl[\sum_{n=i_1}^{N-1}r(n,K-1)-\sum_{n=i_1+1}^{N}u(n,1)\Bigr]+w(N,K-1)\\
&=&\max_{1\leq i_1\leq
 N}\Bigl[\sum_{n=i_1}^{N-1}[ r(n-1,K-2)+w(n,K-1)-w(n+1,K-1)]-
\sum_{n=i_1+1}^{N}u(n,1)\Bigr]+w(N,K-1)\\
&=&\max_{1\leq i_1\leq
 N}\Bigl[\sum_{n=i_1-1}^{N-2}r(n,K-2)+w(i_1,K-1)-w(N,K-1)-
\sum_{n=i_1+1}^{N}u(n,1)\Bigr]+w(N,K-1)\\
&=&\max_{1\leq i_1\leq
N}\Bigl[\sum_{n=i_1-1}^{N-2}r(n,K-2)+w(i_1,K-1)-\sum_{n=i_1+1}^{N}u(n,1)\Bigr]\\
&=&\max_{1\leq i_1\leq
  N}\Bigl[\sum_{n=i_1-1}^{N-2}r(n,K-2)+\max_{1\leq i_2\leq
  i_1-1}[\sum_{n=i_2}^{i_1-2}r(n,K-2)-\sum_{n=i_2+1}^{i_1-1}u(n,2)\Bigr]- \sum_{n=i_1+1}^{N}u(n,1)]\\
&=&\max_{1\leq i_2<i_1\leq
  N}\Bigl[\sum_{n=i_2}^{N-2}r(n,K-2)-\sum_{n=i_2+1}^{i_1-1}u(n,2)-\sum_{n=
i_1+1}^{N}u(n,1)\Bigr]\:.
\end{eqnarray*}

Applying this repeatedly leads to
\begin{equation}\label{eq-sum_w}
\sum_{i=1}^{K}w(N-K+i+1,i)=\max_{1\leq i_{K-1}<\dots<i_1\leq
N}\Bigl[\sum_{n=i_{K-1}}^{N-K+1}u(n,K)-\sum_{n=i_{K-1}+1}^{i_{K-2}-1}u(n,K-1)+\dots
+\sum_{n=i_1+1}^{N}u(n,1)\Bigr]\:.
\end{equation}
Combining \eref{eq-R} and \eref{eq-sum_w}, we obtain
\begin{eqnarray*}
R&=&\min_{1\leq i_{K-1}<\dots<i_1\leq
N}\Bigl[\sum_{n=1}^{i_{K-1}-1}u(n,K)+\sum_{n=i_{K-1}+1}^{i_{K-2}-1}u(n,K-1)+\dots+
\sum_{n=i_1+1}^{N}u(n,1)\Bigr] \\
&=&\min_{\pi\in\widetilde{\Pi}}\Bigl[\sum_{(i,j)\in\pi}u(i,j)\Bigr]\:.
\end{eqnarray*}

\end{proof}

Let us comment on Theorem \ref{row-queue}. 
First, it is interesting to observe
that the 6 quantities appearing in the equalities \eref{main1} and
\eref{main2} all come from the same model. Now, 
consider the two identities \eref{main1} and \eref{main2}
separately. The first one is classical and has led to several
interesting developments for queues in series. A by-product
of the newly proved identity \eref{main2} is therefore to pave the way to obtain
analog results for stores in tandem. We set out this program in
some details in the next Section. 

\subsection{Stochastic models, interchangeability, and hydrodynamic
  limits} 
\label{sse-inter}

If $T$ is a tableau over the alphabet $\{1,\dots, k\}$, we write
$x^T=\prod_{i=1}^{k} x_i^{\alpha_i}$, where $\alpha_i$ is the
number of occurrences of the label $i$ in the tableau. The {\em
Schur function} $s_{\lambda}$ associated with the partition
$\lambda=(\lambda_1,\dots, \lambda_k)$ is defined by
\[
s_{\lambda}(x_1,\dots, x_k)= \sum_{T: \ \mrm{sh}(T)=\lambda}
x^T\:,
\]
where $\mrm{sh}(T)$ is the shape of the tableau $T$. We refer the
reader to the books \cite{saga,stan2}
for more about Schur functions and their connection to the RSK 
algorithm.

Suppose that the random variables
$U=(u(i,j))_{(i,j)\in\{1,\dots,N\}\times\{1,\dots,K\}}$ are
independent and that $u(i,j)$ is geometrically distributed with
parameter $q_j$, for some fixed $q=(q_1,\ldots,q_K)\in (0,1)^K$.
Then \cite{ocon03b} the law of the random
partition $\lambda=(\lambda_1,\ldots,\lambda_K)$ associated with
$w(U)$ is given by
\begin{equation}
P\{\lambda=l\} = a(q)^N s_l(q) s_l(1,\dots ,1),\quad l\in P_K\:,
\end{equation}
where $P_K$ is the set of integer partitions with at most $K$
non-zero parts, $a(q)=\prod_j(1-q_j)$ and $s_l$ is the Schur
function associated with the integer partition $l$. In particular,
the law of $\lambda$ is {\em symmetric} in the parameters
$q_1,\ldots,q_K$. 

This extends to the level of {\em processes}: if we write
$\lambda=\lambda^{(N)}$ to emphasize its dependence on $N$, it is
easy to check~\cite{ocon03b} that the random {\em sequence}
$\lambda^{(N)}$ is a Markov chain in $P_K$ with transition
probabilities given by

\begin{equation}
P\{ \lambda^{(N+1)}=l |\ \lambda^{(N)}=m \} =
a(q)\frac{s_l(q)}{s_m(q)}\:,
\end{equation}
for all $m$ and $l$ such that $l_1\ge m_1\ge l_2\ge m_2\ge
\cdots$. In particular, we can see that the law of the sequence
$\lambda^{(N)}$ is symmetric in the parameters $q_1,\ldots,q_K$. 

These remarks give the law of the departure processes from both
of the stochastic tandem queueing and storage models described above.
Consider a saturated tandem of $K$ queues/stores as defined in \S
\ref{sse-satu}. Assume that the random variables
$u(i,j),i\in \N^*,j\in \{1,\dots, K\}$ are all independent, and that
$u(i,j)$ is geometric of parameter $q_j\in (0,1)$. Let $D^{(N)}$ be the
instant of
departure of customer $N$ from queue $K$ and let $R^{(N)}$ be the
cumulative departures over the time slots 1 to $N$ in store $K$. 
By Theorem \ref{row-queue}, $D^{(N)}$ and $R^{(N)}$ are, respectively,
the lengths of the longest and shortest rows $\lambda_1^{(N)}$ and
$\lambda_K^{(N)}$ of the random tableaux obtained by the RSK algorithm.

\paragraph{Interchangeability.}

A result by Weber \cite{webe79} (see also Anantharam \cite{anan87})
states 
that a series of ./M/1 queues arranged in tandem are interchangeable,
i.e.\ the output
process has the same distribution under any reordering of the queues.
This is exactly the fact that the law of $D$ is symmetric in the
parameters
and can be seen now as a direct consequence of the symmetry of the Schur
polynomials.
We can now extend this result as follows:

\begin{theorem}\label{th-reorder}
The law of $(D^{(N)},R^{(N)})_N$ is unchanged if the
parameters of the geometric laws are modified from $(q_1,\dots, q_K)$ to
$(q_{\sigma(1)}, \dots , q_{\sigma(K)})$, for any permutation $\sigma$
of $\{1,\dots , K\}$. 
\end{theorem}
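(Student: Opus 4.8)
The plan is to reduce the statement to the symmetry results that were already recorded for the single random partition and for the partition-valued Markov chain. The key observation is that by Theorem~\ref{row-queue} we have $D^{(N)}=\lambda_1^{(N)}$ and $R^{(N)}=\lambda_K^{(N)}$, where $(\lambda_1^{(N)},\dots,\lambda_K^{(N)})$ is the shape of the tableau produced by RSK from the family $U^{(N)}=(u(i,j))_{1\le i\le N,\,1\le j\le K}$. Hence the whole process $(D^{(N)},R^{(N)})_N$ is a (deterministic, measurable) function of the partition-valued process $(\lambda^{(N)})_N$. Therefore it suffices to show that the \emph{law of the process} $(\lambda^{(N)})_N$ is invariant under permutation of the parameters $(q_1,\dots,q_K)$; the desired conclusion then follows by pushing forward under the fixed map $(\lambda^{(N)})_N\mapsto(\lambda_1^{(N)},\lambda_K^{(N)})_N$.

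First I would invoke the two facts quoted from \cite{ocon03b} just above the statement: the marginal law $P\{\lambda^{(N)}=l\}=a(q)^N s_l(q)\,s_l(1,\dots,1)$, and the Markov property of $(\lambda^{(N)})_N$ with transition kernel $P\{\lambda^{(N+1)}=l\mid\lambda^{(N)}=m\}=a(q)\,s_l(q)/s_m(q)$ for interlacing $l\succeq m$. Since the Schur polynomial $s_l$ is a symmetric function of its arguments, both $a(q)=\prod_j(1-q_j)$ and $s_l(q)$ are invariant under replacing $q$ by $q_\sigma=(q_{\sigma(1)},\dots,q_{\sigma(K)})$. Consequently the initial law of $\lambda^{(1)}$ and every one-step transition probability are unchanged, and so the finite-dimensional distributions of the Markov chain $(\lambda^{(N)})_{N\ge 1}$ are unchanged. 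This gives the process-level symmetry of $(\lambda^{(N)})_N$ in the parameters.

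To finish, I would note that $(D^{(N)},R^{(N)})_N=g\bigl((\lambda^{(N)})_N\bigr)$ with $g$ the coordinate map $l\mapsto(l_1,l_K)$ applied termwise, which does not depend on $q$ at all. Applying $g$ to the two (equal) laws of $(\lambda^{(N)})_N$ obtained for $q$ and for $q_\sigma$ yields equality in law of $(D^{(N)},R^{(N)})_N$ and of the process built from the permuted parameters. This is exactly the assertion of Theorem~\ref{th-reorder}.

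I do not expect a serious obstacle here: the content is genuinely carried by Theorem~\ref{row-queue} (already proved) together with the cited Schur-measure computations of \cite{ocon03b}, and the only thing to check is the elementary point that a function of a process whose law is parameter-symmetric is itself parameter-symmetric. If one wanted the argument to be self-contained rather than citing \cite{ocon03b}, the mildly delicate step would be re-deriving the transition kernel $a(q)s_l(q)/s_m(q)$ — this uses the branching rule $s_l(q_1,\dots,q_K)=\sum_{m:\,l\succeq m} q_K^{|l|-|m|} s_m(q_1,\dots,q_{K-1})$ together with the RSK/Schensted description of how appending the block of $u(N+1,\cdot)$'s to $w(U)$ updates the shape — but since the excerpt explicitly allows us to assume everything stated earlier, including those two displayed formulas, this can simply be quoted.
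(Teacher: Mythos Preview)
Your proposal is correct and follows exactly the paper's approach: the theorem is stated immediately after (and as a direct consequence of) the discussion that (i) identifies $D^{(N)}=\lambda_1^{(N)}$ and $R^{(N)}=\lambda_K^{(N)}$ via Theorem~\ref{row-queue}, and (ii) observes that the law of the Markov chain $(\lambda^{(N)})_N$ is symmetric in $(q_1,\dots,q_K)$ because both the marginal $a(q)^N s_l(q)s_l(1,\dots,1)$ and the transition kernel $a(q)s_l(q)/s_m(q)$ depend on $q$ only through symmetric functions. You have simply made explicit the push-forward step that the paper leaves implicit.
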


In particular, stores  arranged in tandem are interchangeable. 

\paragraph{Hydrodynamic limits.}

Proving `shape theorems' (also known as `hydrodynamic limits') and
fluctuation results is an important issue in interacting particle
systems, see for instance \cite{KiLa}. 
The law of $D^{(N)}$ is well-known and the connection with RSK has
been exploited before to obtain detailed asymptotic and fluctuation
results for the corresponding interacting particle system.  The tandem
storage (and bus-stop) model however has not been discussed in the
literature. 
A by product of our analysis is that this model is amenable to some
very precise analysis and development.

Indeed, consider Theorem \ref{th-reorder} 
and the important special case when the parameters are equal, $q_j=q$ say.
Then the random variables $D$ and $R$ are jointly distributed as
the 
largest and smallest `eigenvalues' of the Meixner ensemble.  In the
exponential
case, replace Meixner by Laguerre.  For example, if the $u(i,j)$ are
exponentially
distributed with parameter $1$, then the cumulative departures $R^{(N)}$
over the
time slots 1 to N in the corresponding tandem of $K$ stores is
distributed
as the smallest eigenvalue of the $K\times K$ random matrix $AA^*$,
where $A$ is 
a $K\times N$ matrix with i.i.d. standard complex normal entries.  This
distribution is 
known, see for example~\cite{RVAl}.  When $N=K$ it is exponential
with mean $K$.  
This can be used as a starting point for the analysis of the
fluctuations and 
hydrodynamic limit corresponding to the bus-stop model of Section
\ref{sse-part}, a programme
which is beyond the scope of this paper.

\paragraph{Fixed point results.}

Finally we remark that the identity $\bigr[\min_{\pi \in \widetilde{\Pi}}
\sum_{(i,j)\in \pi} u(i,j)=R \bigl]$
can also be used as a cornerstone
for proving fixed point and related weak convergence results for stores
in tandem.  In the context of queues in tandem, these questions
have been pursued for some time and the identity 
$\bigr[ \max_{\pi \in\Pi} \sum_{(i,j)\in \pi} u(i,j)= D \bigl]$ was a key ingredient in
solving them~\cite{chan93,BBMa00,MaPr,prab03}.

\end{document}